\newtheorem{theorem}{Theorem}
\def\endthebibliography{%
	\def\@noitemerr{\@latex@warning{Empty `thebibliography' environment}}%
	\endlist
}
\DeclareMathOperator*{\argmax}{arg\,max}
\DeclareMathOperator*{\argmin}{arg\,min}
\newcommand{\bx}{\mathbf{x}}
\newcommand{\bZ}{\mathbf{Z}}
\newcommand{\E}{\mathbbm{E}}
\title{ Beam Tracking with Photon-Counting Detector Arrays in Free-Space Optical Communications} 
\author{Muhammad~Salman~Bashir,~\emph{Member,~IEEE,}
	and~Mohamed-Slim~Alouini,~\emph{Fellow,~IEEE}
	\thanks{This work is supported by Office of Sponsored Research (OSR) at King Abdullah University of Science and Technology (KAUST). \newline
		M.~S.~Bashir and M.~-S.~Alouini  are with the King Abdullah University of Science and Technology (KAUST), Thuwal, Kingdom of Saudi Arabia 23955-6900.  E-mail: (muhammad.bashir@fulbrightmail.org, slim.alouini@kaust.edu.sa).}
	\thanks{}
	\thanks{}}
\begin{document}
	
	\maketitle
	
	\begin{abstract}
		Optical beam center position on an array of detectors is an important (hidden) parameter that is essential not only from a tracking perspective, but is also important for optimal detection of Pulse Position Modulation symbols in free-space optical communications. In this paper, we have examined the beam position estimation problem for photon-counting detector arrays, and to this end, we have proposed and analyzed a number of non-Bayesian beam position estimators. These estimators are compared in terms of their mean-square error, bias and the probability of error performance. Additionally, the Cram\`er-Rao Lower Bounds (CRLB) of the tracking error is also derived, and the CRLB curves give us additional insights concerning the effect of number of detectors and the beam radius on mean-square error performance. Finally, the effect of beam position estimation on the probability of error performance is investigated, and our study concludes that the probability of error of the system is minimized when the beam position on the array is estimated as accurately as possible. 
	\end{abstract}

\begin{IEEEkeywords}Free-space optical communications, photon-counting detector arrays, beam position estimators, mean-square tracking error, Cram\`er-Rao Lower Bound, probability of error.
\end{IEEEkeywords}
\IEEEpeerreviewmaketitle
	
	\section{Introduction}
	Free-space optics (FSO) plays an important role in backhaul networks in 5G wireless communications due to the availability of large chunks of bandwidth in the optical spectrum. However, the problem of pointing, acquisition and tracking is significant in the context of FSO because of the narrow beam widths associated with the optical signal. Acquisition is the process in which the two terminals acquire the initial location of each other before the actual data communication begins. However, after the acquisition is achieved, the system still needs to maintain the alignment between the transmitter and receiver assemblies due to physical factors, such as random effects associated with atmospheric turbulence, the mechanical jitter introduced in the transmitter/receiver assemblies, or building sways due to wind vibrations. This misalignment leads to the loss of received signal energy at the receiver that may increase the outage probability at the receiver. If the beam center can be tracked efficiently on the array, a feed control loop, based on the more agile gimbal-less MEMS retroreflective system \cite{Deng}, can adjust the transmitter/detector assemblies in order to point the field-of-view in the required direction efficiently.

	The interest in pointing and tracking subsystems in FSO has picked up recently due to the deployment of Facebook Connectivity and Google Loon projects \cite{Kaymak} in order to provide internet access to regions of the world that lack a traditional communications infrastructure.  For instance, it is planned that the optical signal from the transmitter will be relayed over to the people in a remote/inaccessible location via a network of balloons/drones. The tracking problem becomes more significant due to the movement of balloons or drones owing to wind motion or inaccurate hovering. Additionally, the demand on accurate tracking becomes more stringent with orbital-angular-momentum beams \cite{Li}.
	
	In this paper, we consider the optical beam position tracking\footnote{Typically, the word ``tracking'' is associated with the ``filtering'' phenomenon where small variations in the parameter are tracked continuously, and all the past data is fused to arrive at the current estimate. In this paper, we only use the current or present set of data to estimate the beam position at each instant of time. }  problem for a free-space optical communication system that employs multiple photon-counting detectors (array of detectors) instead of one large (monolithic) detector at the receiver.  In this study, the purpose of the detector array is twofold: it is used for symbol detection as well as for tracking the beam. Thus, the proposed system is more efficient in terms of bandwidth and hardware complexity since no pilot symbols and extra hardware (mirrors/quadrant photodetectors etc.) are needed in order to track the beam. Hence, in light of this argument, the beam position on the array has two roles to serve; i) it provides error signal to the feedback loop in order to adjust the transceiver alignment assemblies, ii) and the beam position on the array is also part of the channel state information needed for optimal detection of data symbols\footnote{Such data symbols may correspond to Pulse position modulation symbols in an \emph{intensity modulated direct detection} (IM/DD) setting.}. 
	
	Since the beam position is unknown, and a possibly random parameter, we have to estimate it in real-time. In this regard, a number of beam position estimators are proposed, and their performance is analyzed. We will see (through the Cram\'er-Rao Lower Bound) that the mean-square performance improves as the number of detectors in the array is increased while keeping the array area fixed. Additionally, the probability of error performance of our system also improves if we increase the number of detectors in the array. However, the improvement in performance comes with the increased overhead of complexity (computational complexity of  estimators and the circuit/storage complexity), as the number of elements in the array is increased. All these ideas will be discussed in the remaining sections of this paper.

	\begin{figure}
		\begin{center}
			\includegraphics[scale=1.5]{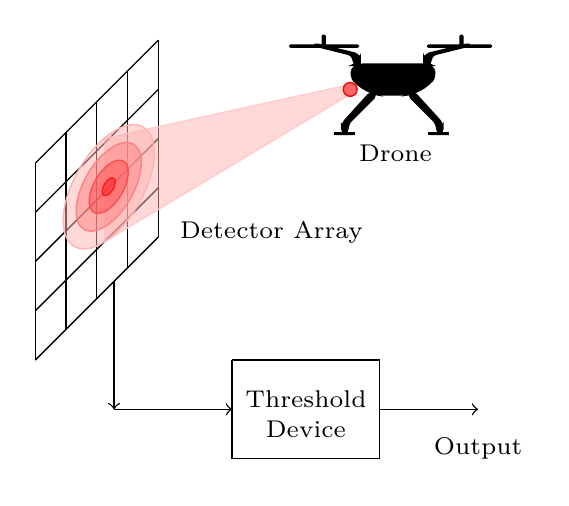}
		\end{center}
	\vspace{-0.2cm}
	\caption{\small A drone projecting the Gaussian beam on a $4 \times 4$ detector array on a ground-based optical receiver.}
	\end{figure}

\section{State-of-the-Art and Contributions of this Paper}

There is no dearth of literature on research in pointing, acquisition and tracking (PAT) systems in FSO that treats the tracking problem purely from a hardware point-of-view. In this respect, \cite{Kaymak} provides a detailed overview of the current state-of-the-art hardware solutions for tracking the optical beam.  Thus, we will cover the literature review from the theoretical/ signal processing perspective since such a perspective is more relevant to our study in this paper. 

The authors in \cite{Gagliardi} and \cite{Marola} have discussed the performance of a proposed feedback (beam) tracking loop that acts on the error signal provided by a quadrant photodetector in the receiver assembly. The work in \cite{Marola} actually builds on the study present in \cite{Gagliardi} by carrying out the stability analysis of their proposed cooperative feedback loop. The authors in \cite{Slocumb} present the performance analysis of centroid and maximum likelihood estimators of beam position for a ``continuous''\footnote{A continuous array is obtained if the number of cells in the detector array goes to infinity while keeping the array area finite. In other words, we have perfect information about the location of each photodetection in the array. Thus, the continuous arrays lead to the best mean-square error performance.} array. Regarding the literature that covers communications with detector arrays in free-space optics, the authors in \cite{Bashir1} propose beam position estimation algorithms  and examine their mean-square error performance with simulations. The work in \cite{Bashir2} extends the work in \cite{Bashir1} by introducing Bayesian filtering algorithms, such as Kalman and particle filters, for tracking the time-varying beam position. The authors in \cite{Bashir3} inspect the relationship between the probability of error and the estimation of beam position on the detector array, and by an argument based on Chernoff Bounds, they show that precise estimation of beam center on the array is necessary in order to minimize the probability of error. Additionally, the author in \cite{Bashir4} presents a mathematical argument to show that the probability of error decreases monotonically as the number of cells in the array is increased. Furthermore, the authors in \cite{Bashir6} analyze the acquisition performance of an FSO system that employs an array of detectors at the receiver. Finally, the authors in \cite{Bashir5} consider time synchronization schemes based on an array of detectors.

Furthermore, we also briefly discuss the literature on pointing and tracking in FSO systems that examine the tracking problem from the perspective of a single detector. In this regard, \cite{Farid} develops the pointing error statistics for a circularly shaped detector and a Gaussian beam, and the outage capacity is optimized as a function of beam radius. The authors in \cite{Mai} investigate a slightly different optimization problem concerning pointing: The maximization of link availability as a function of beam radius (for fixed signal power). Additionally, they also explore the minimization of transmitted power by tuning to the optimal beam radius under the constraint of a fixed link availability. In addition to these papers, the interested reader may be directed to \cite{Ansari, Zedini, Quwaiee, Issaid} for a detailed study on the performance of FSO systems when the optical channel suffers degradation due to pointing errors for a single detector receiver.

This paper proposes a number of  beam tracking algorithms for FSO communications with an array of detectors and a Gaussian beam.  As discussed in the introduction, we plan to analyze the joint problem of beam tracking and data detection with an array of detectors in order to skimp on the required bandwidth and energy for our system \cite{Ferraro:15}. Hence, in our decision-directed scheme, the data symbols aid the beam tracking on the array, and, in return, the efficient beam tracking process helps with accurate detection of symbols. In this regard, we build on the study done in \cite{Bashir1} by providing a more theoretical framework for the tracking problem. Hence, in addition to proposing a number of additional estimators, we analyze the mean-square tracking error in terms of Cram\'er-Rao Lower Bound, which gives us some deeper insights into the tracking performance of the system. Furthermore, we explore a few interesting asymptotic scenarios which simplify the expressions, and help us get a better understanding of the problem. The algorithms proposed in this paper use photon counts in the detectors generated during an observation interval as a sufficient statistic for tracking the beam position. Photon-counting detectors provide a better probability of error performance as compared to analog detectors for low signal-to-noise ratio scenario \cite{Bashir6}.

 Additionally, the effect of the beam tracking algorithms on the probability of error is also analyzed. We reason---by using an analytical argument for the asymptotic case scenario (infinite number of detectors and poor SNR)---that the probability of error is minimized when beam position on the array is estimated accurately enough. Even though, the authors in \cite{Bashir3} have presented an argument on the minimization of probability of error (as a function of beam position) using Chernoff Bounds, the arguments presented in this study are more robust. 

The major assumption regarding tracking with detector arrays is that the array area is large enough so that the  beam footprint is smaller than the footprint of the array. This is a valid assumption for channels which are not marred by scintillation effects due to turbulence (e.g., an optical link in the stratosphere), or for channels where the length of the link is of the order of a few hundred kilometers \cite{Bashir6}. Additionally, in this paper, the focus is on non-Bayesian estimation techniques for beam tracking. This is due to the fact that unless we are certain about the parameters of the prior random motion model of the beam on the array, we are likely going to incur a significant loss in performance if there is mismatch in our assumptions and the real world parameters\footnote{This is especially true if the parameters themselves---such as the covariance matrices of the random motion model---are time-varying.} \cite{Shmaliy}.

This paper is organized as follows. In Section~\ref{system_model}, we discuss the system model, and Section~\ref{CRLB} contains the derivation of the Cram\'er-Rao Lower Bound for the beam position tracking error. The high complexity trackers, such as nonlinear least squares estimator and maximum likelihood estimators are described in Section~\ref{HCT}, and the maximum detector count estimator and different versions of the centroid estimator are discussed in Section~\ref{LCT}. Section~\ref{PE} considers the probability of error analysis for different beam position trackers, and Section~\ref{sims} elaborates on simulation principles and parameters. This is followed by a brief complexity analysis and the conclusions which are summarized in Section~\ref{Complexity} and Section~\ref{Conc}, respectively.

\section{System Model for Beam Position Tracking} \label{system_model}
The received optical signal on the receiver aperture gives rise to photoelectrons or photodetections in each detector of the array due to the \emph{photoelectric effect}. The emission of these photoelectrons during the signal pulse interval helps us detect transmitted symbols. The photon count $Z_m$ in the $m$th detector or cell of the array---during some observation interval $T_s$---is modeled as a discrete random variable. Its probability mass function  is characterized by the following expression:
\begin{align}
P(\{ Z_m = z_m \}) = \frac{ \exp \left( {-\iint_{A_m} \left[ \lambda_s(x,y) + \lambda_n \right] \, dx\, dy} \right) (\iint_{A_m} \left[ \lambda_s(x,y) + \lambda_n \right] \, dx\, dy)^{z_m} } {z_m!} \label{photon_model} , \; m = 1, \dotsc, M, 
\end{align}
where $\lambda_s(x, y)$ is the scaled beam intensity\footnote{The actual signal intensity, $\lambda_{s_i}$, and the actual noise intensity, $\lambda_{n_i}$, are multiplied by the constant $ \frac{\eta T_s}{hc/\lambda}$ in order to obtain the intensity $\lambda_s$ and $\lambda_n$ for the photon generation model in \eqref{photon_model}. The constant $h$ is known as the \emph{Planck's constant}, and its value is $6.62607004 \times 10^{-34}\, {m}^2 kg / s$. The constant $c$ is the speed of light in vacuum which is about $3\times 10^{8} \, m/s$, $\lambda$ is the wavelength of light in meters, $\eta$ stands for the photoconversion efficiency, and $T_p$ represents signal pulse duration.} profile on the detector array, $\lambda_n$ is the scaled noise intensity profile, $A_m$ is the region of the $m$th detector on the detector array, $Z_1, Z_2, \dots, Z_M$ are independent Poisson random variables and $M$ is the total number of detectors in the array. 
\begin{figure}
	\begin{center}
		\includegraphics[scale=0.5]{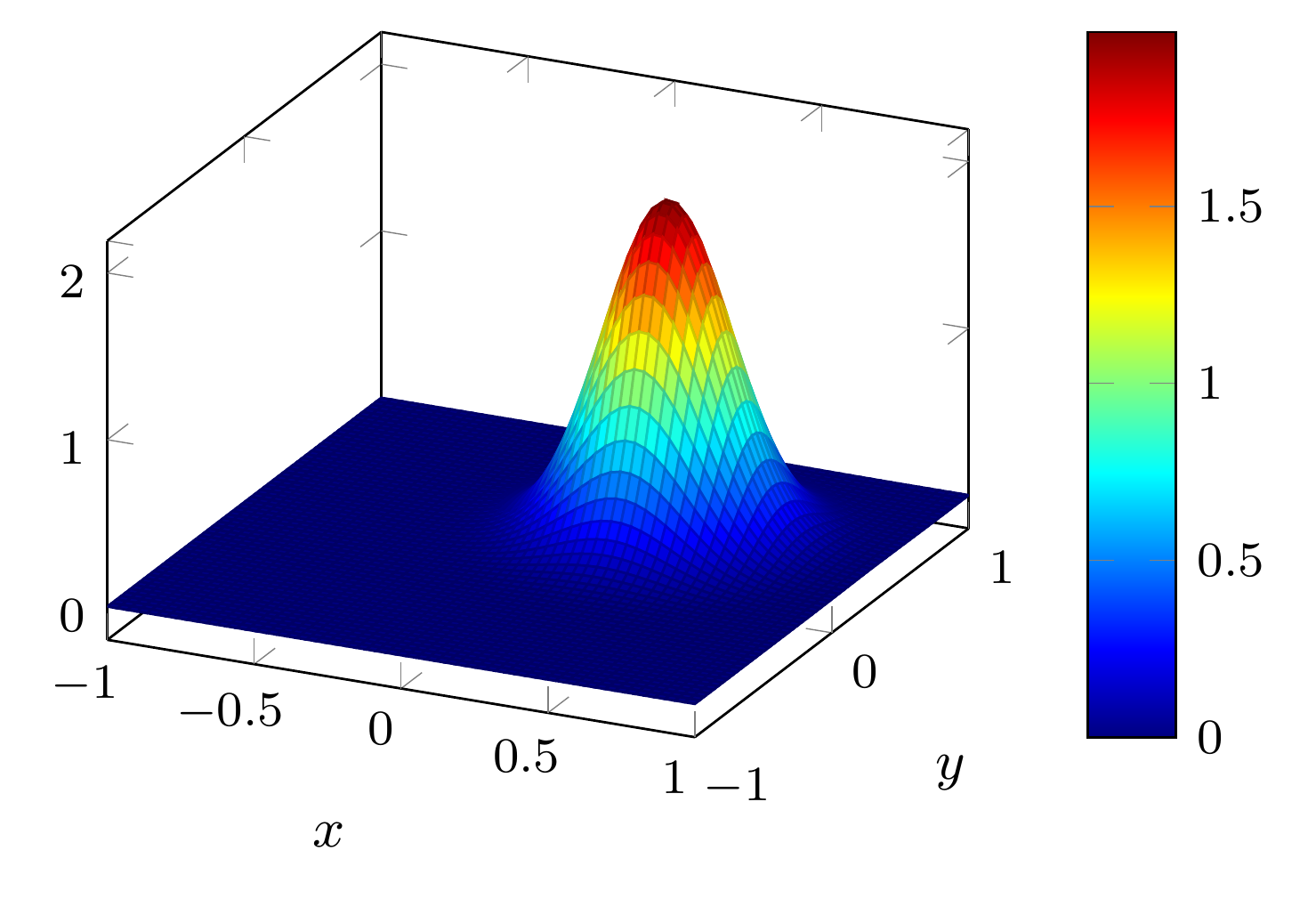}
	\end{center}
	\caption{\small Profile of the incident beam on the detector array.}
\end{figure} As may have been discerned by the reader, the coordinate $(x,y)$ stands for any point inside the region of the detector array. Moreover, $\lambda_n$ is a constant factor that accounts for the background radiation and the thermal effects of the detector array \cite{Snyder}.

For Gaussian beams, the received (scaled) signal and noise intensity at the detector array is given by the expression
\begin{align}
\lambda_s(x,y,d)& \triangleq \frac{I_0}{\rho^2(d)} \exp\left( \frac{-(x-x_0)^2 -(y-y_0)^2}{2 \rho^2(d)}  \right), \label{intensity}
\end{align}
where $I_0/\rho^2(d)$ is the peak intensity in W/$\text{m}^2$/s,  $\rho(d) = \rho_0 \sqrt{1 + \left(\frac{\lambda d}{\pi \rho_0^2} \right)^2}$ meters, and $(x_0, y_0)$ is the center of the Gaussian beam on the detector array. The factor $\rho_0$ is the \emph{beam waist} measure in meters, and $\rho(d)$ is known at the \emph{beam radius} or the \emph{spot size} at a distance $d$ meters from the transmitter.
Finally, the constant $\lambda_n$ denotes the uniformly distributed background radiation intensity and noise effect at the receiver. 

In the latter sections, the dependence of $\lambda(x,y,d)$ and $\rho(d)$ on distance $d$ is removed since we assume that $d$ is fixed, and therefore, $\lambda(x,y) \triangleq \lambda(x,y,d)$ and $\rho \triangleq \rho(d)$. Furthermore, it is a general assumption in the following sections that the center of the array has the coordinates $(0,0)$, and that the array extends from $-a$ to $a$ in each dimension for $a\in \mathbb{R}^{+}$.  Additionally, the area of $A_m$ is denoted by $A$ since all detectors are assumed to have an equal area.

\section{Cram\'er-Rao Lower Bound for Beam Position Tracking Error} \label{CRLB}
In this section, we derive the \emph{Cram\'er-Rao Lower Bound} (CRLB) for the beam position tracking error. In this regard, the likelihood function is given by
\begin{align}
p(\bZ|x_0, y_0) = \prod_{m=1}^M {e^{-\Lambda_m}} \frac{\Lambda_m^{z_m}}{z_m!},
\end{align} 
where
 \begin{align}
\Lambda_m \triangleq \iint_{A_m} \left( \frac{I_0}{\rho^2} e^{-\frac{(x-x_0)^2 + (y-y_0)^2}{2\rho^2}} + \lambda_n\right)\,dx \,dy, \label{intensity1}
\end{align} 
and the random vector $\bZ \triangleq \begin{bmatrix}
Z_1 & Z_2 & \dotsb & Z_M
\end{bmatrix}^T$. Let us define the total incident power on the array $\Lambda_s \triangleq \sum_{m=1}^M \Lambda_m$. Then, 
\begin{align}
\ln p(\bZ|x_0, y_0) = \sum_{m=1}^M z_m \ln \Lambda_m - \Lambda_m - \ln z_m! = \sum_{m=1}^M z_m \ln \Lambda_m -\ln z_m! - \Lambda_s.
\end{align}
Thus,
\begin{align}
&\frac{\partial  \ln p(\bZ|x_0, y_0)}{\partial x_0} = \sum_{m=1}^M \frac{Z_m}{\Lambda_m} \iint_{A_m} \frac{I_0}{\rho^2} e^{-\frac{(x-x_0)^2 + (y-y_0)^2}{2 \rho^2}} \frac{(x-x_0)}{\rho^2}\, dx \,dy - \underbrace{\iint_{\mathcal{A}} \frac{I_0}{\rho^2} e^{-\frac{(x-x_0)^2 + (y-y_0)^2}{2 \rho^2}} \frac{(x-x_0)}{\rho^2}\, dx \,dy}_{0}\nonumber \\
&= \sum_{m=1}^M \frac{Z_m}{\Lambda_m} \iint_{A_m} \frac{I_0}{\rho^4} (x-x_0) e^{-\frac{(x-x_0)^2 + (y-y_0)^2}{2 \rho^2}} \, dx \,dy 
\end{align}
and 
\begin{align}
&\frac{\partial^2 \ln p(\bZ|x_0, y_0)}{\partial x_0^2} = \sum_{m=1}^M -\frac{Z_m}{\Lambda_m^2} \left( \iint_{A_m} \frac{I_0}{\rho^4} (x-x_0) e^{-\frac{(x-x_0)^2 + (y-y_0)^2}{2 \rho^2}} \, dx \,dy \right)^2 \nonumber \\
& -\sum_{m=1}^M \frac{Z_m}{\Lambda_m} \iint_{A_m} \frac{I_0}{\rho^4} e^{-\frac{(x-x_0)^2 + (y-y_0)^2}{2 \rho^2}} \, dx \,dy + \sum_{m=1}^M \frac{Z_m}{\Lambda_m} \iint_{A_m} \frac{I_0}{\rho^6} (x-x_0)^2 e^{-\frac{(x-x_0)^2 + (y-y_0)^2}{2 \rho^2}} \, dx \,dy.
\end{align}
Now, the expectation is taken with respect to $Z_m$:
\begin{align}
&-\E\left[ \frac{\partial^2\ln p(\bZ|x_0, y_0)}{\partial x_0^2 }  \right] = \sum_{m=1}^M \frac{1}{\Lambda_m} \left( \iint_{A_m} \frac{I_0}{\rho^4} (x-x_0) e^{-\frac{(x-x_0)^2 + (y-y_0)^2}{2 \rho^2}} \, dx \,dy \right)^2 \nonumber \\
& +\sum_{m=1}^M  \iint_{A_m} \frac{I_0}{\rho^4} e^{-\frac{(x-x_0)^2 + (y-y_0)^2}{2 \rho^2}} \, dx \,dy - \sum_{m=1}^M  \iint_{A_m} \frac{I_0}{\rho^6} (x-x_0)^2 e^{-\frac{(x-x_0)^2 + (y-y_0)^2}{2 \rho^2}} \, dx \,dy \nonumber \\
&= \sum_{m=1}^M \frac{1}{\Lambda_m} \left( \iint_{A_m} \frac{I_0}{\rho^4} (x-x_0) e^{-\frac{(x-x_0)^2 + (y-y_0)^2}{2 \rho^2}} \, dx \,dy \right)^2 \nonumber \\
& + \iint_{\mathcal{A}} \frac{I_0}{\rho^4} e^{-\frac{(x-x_0)^2 + (y-y_0)^2}{2 \rho^2}} \, dx \,dy -  \iint_{\mathcal{A}} \frac{I_0}{\rho^6} (x-x_0)^2 e^{-\frac{(x-x_0)^2 + (y-y_0)^2}{2 \rho^2}} \, dx \,dy \nonumber \\
&= \sum_{m=1}^M \frac{1}{\Lambda_m} \left( \iint_{A_m} \frac{I_0}{\rho^4} (x-x_0) e^{-\frac{(x-x_0)^2 + (y-y_0)^2}{2 \rho^2}} \, dx \,dy \right)^2 + \frac{2\pi I_0}{\rho^2} - \frac{2 \pi I_0}{\rho^2} \nonumber \\
&= \sum_{m=1}^M \frac{1}{\Lambda_m} \left( \iint_{A_m} \frac{I_0}{\rho^4} (x-x_0) e^{-\frac{(x-x_0)^2 + (y-y_0)^2}{2 \rho^2}} \, dx \,dy \right)^2.
\end{align}
Similarly, it can be shown that 
\begin{align}
-\E\left[ \frac{\partial^2\ln p(\bZ|x_0, y_0)}{\partial y_0^2 }  \right]& = \sum_{m=1}^M \frac{1}{\Lambda_m} \left( \iint_{A_m} \frac{I_0}{\rho^4} (y-y_0) e^{-\frac{(x-x_0)^2 + (y-y_0)^2}{2 \rho^2}} \, dx \,dy \right)^2.
\end{align}

Furthermore, 
\begin{align}
&\frac{\partial^2 \ln p(\bZ|x_0, y_0)}{\partial x_0 \partial y_0} = \sum_{m=1}^M -\frac{Z_m}{\Lambda_m^2} \iint_{A_m} \frac{I_0}{\rho^4} (y-y_0) e^{-\frac{(x-x_0)^2 + (y-y_0)^2}{2 \rho^2}} \, dx \,dy\times \iint_{A_m} \frac{I_0}{\rho^4} (x-x_0) e^{-\frac{(x-x_0)^2 + (y-y_0)^2}{2 \rho^2}} \, dx \,dy \nonumber \\
&+ \sum_{m=1}^M\frac{Z_m}{\Lambda_m} \iint_{A_m} \frac{I_0}{\rho^6} (x-x_0) (y-y_0) e^{-\frac{(x-x_0)^2 + (y-y_0)^2}{2 \rho^2}} \, dx \,dy 
\end{align}
\begin{align}
&-\E\left[ \frac{\partial^2\ln p(\bZ|x_0, y_0)}{\partial x_0 \partial y_0 }  \right] \nonumber \\
&= \sum_{m=1}^M \frac{1}{\Lambda_m} \iint_{A_m} \frac{I_0}{\rho^4} (y-y_0) e^{-\frac{(x-x_0)^2 + (y-y_0)^2}{2 \rho^2}} \, dx \,dy \iint_{A_m} \frac{I_0}{\rho^4} (x-x_0) e^{-\frac{(x-x_0)^2 + (y-y_0)^2}{2 \rho^2}} \, dx \,dy \nonumber \\
&-\underbrace{\iint_{\mathcal{A}} \frac{I_0}{\rho^6} (x-x_0) (y-y_0) e^{-\frac{(x-x_0)^2 + (y-y_0)^2}{2 \rho^2}} \, dx \,dy}_{0} \nonumber \\
&= \sum_{m=1}^M \frac{1}{\Lambda_m} \iint_{A_m} \frac{I_0}{\rho^4} (y-y_0) e^{-\frac{(x-x_0)^2 + (y-y_0)^2}{2 \rho^2}} \, dx \,dy \iint_{A_m} \frac{I_0}{\rho^4} (x-x_0) e^{-\frac{(x-x_0)^2 + (y-y_0)^2}{2 \rho^2}} \, dx \,dy\nonumber \\
&= -\E\left[ \frac{\partial^2\ln p(\bZ|x_0, y_0)}{\partial y_0 \partial x_0 }  \right].
\end{align}
 Moreover, the \emph{Fisher Information Matrix} is
 \begin{align}
 I(x_0, y_0) = \begin{bmatrix}
 -\E\left[ \frac{\partial^2\ln p(\bZ|x_0, y_0)}{\partial x_0^2 }  \right] & -\E\left[ \frac{\partial^2\ln p(\bZ|x_0, y_0)}{\partial x_0 \partial y_0 } \right] \\[8pt]
  -\E\left[ \frac{\partial^2\ln p(\bZ|x_0, y_0)}{\partial y_0 \partial x_0 } \right] & -\E\left[ \frac{\partial^2\ln p(\bZ|x_0, y_0)}{\partial y_0^2 }  \right]
 \end{bmatrix},
 \end{align}
	and $\text{Var}\left[ \hat{x}_0\right] \geq \left[I^{-1}(x_0, y_0)\right]_{1,1}$, and $\text{Var}\left[ \hat{y}_0\right] \geq \left[I^{-1}(x_0, y_0)\right]_{2,2}$. 
	Finally, 
	\begin{align}
	&\text{Var}[\hat{x}_0] \geq {\sum_{m=1}^M \frac{1}{\Lambda_m} \left( \iint_{A_m} \frac{I_0}{\rho^4} (y-y_0) e^{-\frac{(x-x_0)^2 + (y-y_0)^2}{2 \rho^2}} \, dx \,dy \right)^2}\nonumber \\
	&\div \left[\sum_{m=1}^M \frac{1}{\Lambda_m} \left( \iint_{A_m} \frac{I_0}{\rho^4} (x-x_0) e^{-\frac{(x-x_0)^2 + (y-y_0)^2}{2 \rho^2}} \, dx \,dy \right)^2 \times \sum_{m=1}^M \frac{1}{\Lambda_m} \left( \iint_{A_m} \frac{I_0}{\rho^4} (y-y_0) e^{-\frac{(x-x_0)^2 + (y-y_0)^2}{2 \rho^2}} \, dx \,dy \right)^2 \right. \nonumber \\ 
	&- \left. \left(\sum_{m=1}^M \frac{1}{\Lambda_m} \iint_{A_m} \frac{I_0}{\rho^4} (y-y_0) e^{-\frac{(x-x_0)^2 + (y-y_0)^2}{2 \rho^2}} \, dx \,dy \iint_{A_m} \frac{I_0}{\rho^4} (x-x_0) e^{-\frac{(x-x_0)^2 + (y-y_0)^2}{2 \rho^2}} \, dx \,dy\right)^2 \right], \label{crlbx}
	\end{align}
	and 
	\begin{align}
	&\text{Var}[\hat{y}_0] \geq {\sum_{m=1}^M \frac{1}{\Lambda_m} \left( \iint_{A_m} \frac{I_0}{\rho^4} (x-x_0) e^{-\frac{(x-x_0)^2 + (y-y_0)^2}{2 \rho^2}} \, dx \,dy \right)^2}\nonumber \\
	&\div \left[\sum_{m=1}^M \frac{1}{\Lambda_m} \left( \iint_{A_m} \frac{I_0}{\rho^4} (x-x_0) e^{-\frac{(x-x_0)^2 + (y-y_0)^2}{2 \rho^2}} \, dx \,dy \right)^2 \times \sum_{m=1}^M \frac{1}{\Lambda_m} \left( \iint_{A_m} \frac{I_0}{\rho^4} (y-y_0) e^{-\frac{(x-x_0)^2 + (y-y_0)^2}{2 \rho^2}} \, dx \,dy \right)^2 \right. \nonumber \\ 
	&- \left. \left(\sum_{m=1}^M \frac{1}{\Lambda_m} \iint_{A_m} \frac{I_0}{\rho^4} (y-y_0) e^{-\frac{(x-x_0)^2 + (y-y_0)^2}{2 \rho^2}} \, dx \,dy \iint_{A_m} \frac{I_0}{\rho^4} (x-x_0) e^{-\frac{(x-x_0)^2 + (y-y_0)^2}{2 \rho^2}} \, dx \,dy\right)^2 \right]. \label{crlby}
	\end{align}
	
	\subsection{Asymptotic Case ($M \to \infty$)}
	In the following analysis, let us analyze the lower bound on the variance of $\hat{x}_0$ only. The same analysis will hold in the case of lower bound on the variance of $\hat{y}_0$ due to the symmetric nature of the Gaussian beam. 
	\subsubsection{High Signal-To-Noise Ratio}
	For high SNR, $\lambda_n A << \iint_{A_m} \frac{I_0}{\rho^2} e^{-\frac{(x-x_0)^2 + (y-y_0)^2}{2 \rho^2}} \, dx \,dy$. Then, $\Lambda_m \approx \iint_{A_m} \frac{I_0}{\rho^2} e^{-\frac{(x-x_0)^2 + (y-y_0)^2}{2 \rho^2}} \, dx \,dy$. When $M \to \infty$, $ 
	\Lambda_m \approx \frac{I_0}{\rho^2} e^{-\frac{(x_m-x_0)^2 + (y_m-y_0)^2}{2 \rho^2}} \Delta_M$, where $(x_m, y_m)$ is the center of the $m$th small cell, and $\Delta_M$ is the infinitesimal area. Then, the numerator of \eqref{crlbx} simplifies as
	\begin{align}
	 &\sum_{m=1}^M \frac{1}{\Lambda_m} \left( \iint_{A_m} \frac{I_0}{\rho^4} (y-y_0) e^{-\frac{(x-x_0)^2 + (y-y_0)^2}{2 \rho^2}} \, dx \,dy\right)^2 \approx \sum_{m=1}^M \frac{\left(\frac{I_0}{\rho^4} (y_m-y_0) e^{-\frac{(x_m-x_0)^2 + (y_m-y_0)^2}{2 \rho^2}} \Delta_M \right)^2} {\frac{I_0}{\rho^2} e^{-\frac{(x_m-x_0)^2 + (y_m-y_0)^2}{2 \rho^2} } \Delta_M} \nonumber \\
	 &= \sum_{m=1}^M \frac{I_0}{\rho^6}(y_m-y_0)^2 e^{-\frac{(x_m-x_0)^2 + (y_m-y_0)^2}{2 \rho^2}} \Delta_M \approx \frac{I_0 2\pi }{\rho^4} \iint_{\mathcal{A}}  \frac{1}{2\pi \rho^2} (y-y_0)^2 e^{-\frac{(x-x_0)^2 + (y-y_0)^2}{2 \rho^2}} \, dx \, dy\nonumber \\
	 &= \frac{I_0 2 \pi}{\rho^4} \rho^2 = \frac{I_0 2 \pi}{\rho^2}.
	\end{align}  
	The positive term in the denominator can be simplified in a similar fashion. The square root of the term with minus sign can be simplified as
	\begin{align}
	&\sum_{m=1}^M \frac{1}{\Lambda_m} \iint_{A_m} \frac{I_0}{\rho^4} (y-y_0) e^{-\frac{(x-x_0)^2 + (y-y_0)^2}{2 \rho^2}} \, dx \,dy \iint_{A_m} \frac{I_0}{\rho^4} (x-x_0) e^{-\frac{(x-x_0)^2 + (y-y_0)^2}{2 \rho^2}} \, dx \,dy\nonumber \\
	&=\approx \sum_{m=1}^M \frac{\frac{I_0}{\rho^4} (y_m-y_0) e^{-\frac{(x_m-x_0)^2 + (y_m-y_0)^2}{2 \rho^2} } \Delta_M  }{\frac{I_0}{\rho^2} e^{-\frac{(x_m-x_0)^2 + (y_m-y_0)^2}{2 \rho^2} } \Delta_M  } \times \frac{I_0}{\rho^4} (x_m-x_0) e^{-\frac{(x_m-x_0)^2 + (y_m-y_0)^2}{2 \rho^2} } \Delta_M \nonumber \\
	&\approx \frac{I_0 2 \pi}{\rho^4}   \iint_{A_m} \frac{1}{2\pi\rho^2} (y-y_0)(x-x_0) e^{-\frac{(x-x_0)^2 + (y-y_0)^2}{2 \rho^2}} \, dx \,dy = 0.
	\end{align}
	Therefore, 
	\begin{align}
	\text{Var}[\hat{x}_0] \geq \frac{\frac{I_0 2 \pi}{\rho^2}} {\frac{I_0 2\pi}{\rho^2} \times \frac{I_0 2\pi}{\rho^2}} = \frac{\rho^2}{I_0 2 \pi}.
	\end{align}
	We note that the CRLB is minimized by minimizing $\rho$ (a more focused beam) for fixed signal power. Additionally, as expected, the CRLB improves with higher $I_0$ (higher signal power).
	\subsubsection{Low Siganl-To-Noise Ratio}
	In this case, let us assume that $\lambda_n A >> \iint_{A_m} \frac{I_0}{\rho^2} e^{-\frac{(x-x_0)^2 + (y-y_0)^2}{2 \rho^2}} \, dx \,dy$. Then, $\Lambda_m \approx \lambda_n A$. In this case, the square root of the term with the minus sign in the denominator is
	\begin{align}
	\frac{1}{\lambda_n A} \sum_{m=1}^M \iint_{A_m} \frac{I_0}{\rho^4} (y-y_0) e^{-\frac{(x-x_0)^2 + (y-y_0)^2}{2 \rho^2}} \, dx \,dy \times \iint_{A_m} \frac{I_0}{\rho^4} (x-x_0) e^{-\frac{(x-x_0)^2 + (y-y_0)^2}{2 \rho^2}} \, dx \,dy,
	\end{align}
	which is zero due to the symmetric nature of the Gaussian beam. Therefore, by further simplification,
	\begin{align}
	\text{Var}[\hat{x}_0] \geq \frac{\lambda_n \rho^8}{I_0^2} \times \frac{\Delta_M}{\sum_{m=1}^M \left( \iint_{A_m} (x-x_0) e^{-\frac{(x-x_0)^2 + (y-y_0)^2}{2 \rho^2}} \, dx \,dy \right)^2}
	\end{align} 
	which goes to \[ \frac{\frac{\lambda_n \rho^8}{I_0^2}}{\iint_{\mathcal{A}}(x-x_0)^2 e^{-\frac{(x-x_0)^2 + (y-y_0)^2}{\rho^2}} \, dx \,dy } = \frac{2  \rho^4}{\pi \left(\frac{I_0^2}{\lambda_n}\right) } \] as $M \to \infty$.
	 In this case, as expected, the CRLB is inversely proportional to $\frac{I_0^2}{\lambda_n}$.
	 
	 \subsection{Analysis of CRLB Curves}\label{CRLB_analysis}
	  Fig.~\ref{fig4a} displays the CRLB curves plotted as a function of noise power. The results indicate a ``diminishing rate of return'' trend as $M$ increases indefinitely. Fig.~\ref{fig4b} depicts the CRLB curves as a function of $\rho$. We can see that there is an optimum value of $\rho$ (lets call it $\rho_M^*$ for the $M$-cell array) at which the CRLB is minimized. Additionally, $\rho^*_{N} < \rho^*_M$ for $N > M$. Intuitively, these observations are straightforward to explain. For  fixed SNR, if the beam footprint is small, but at least covers one cell completely, then such a small beam footprint will minimize the mean-square error. This is true since all the power is focused into a small region on the array where the number of noise photons (on average) is relatively small, and this fact will help the tracker to estimate the beam position more accurately as opposed to a more ``spread out'' beam.  	 
	  
	  However, if the beam radius is much smaller than the dimensions of the cell, then the beam will only give rise to photons in the cell in which it is located, and the neighboring cells will not register any signal photons. Since we round off the locations of the photons---that occur inside a given cell---to the center of the cell, any movement of the ``super thin'' beam inside the given cell cannot be tracked. Therefore, the CRLB rises if $\rho$ diminishes beyond a certain (optimum) value.
	 \begin{figure}
	 	\centering
	 	\hspace{-1cm}
	 	\begin{subfigure}[t]{0.5\textwidth}
	 		\includegraphics[width=\textwidth]{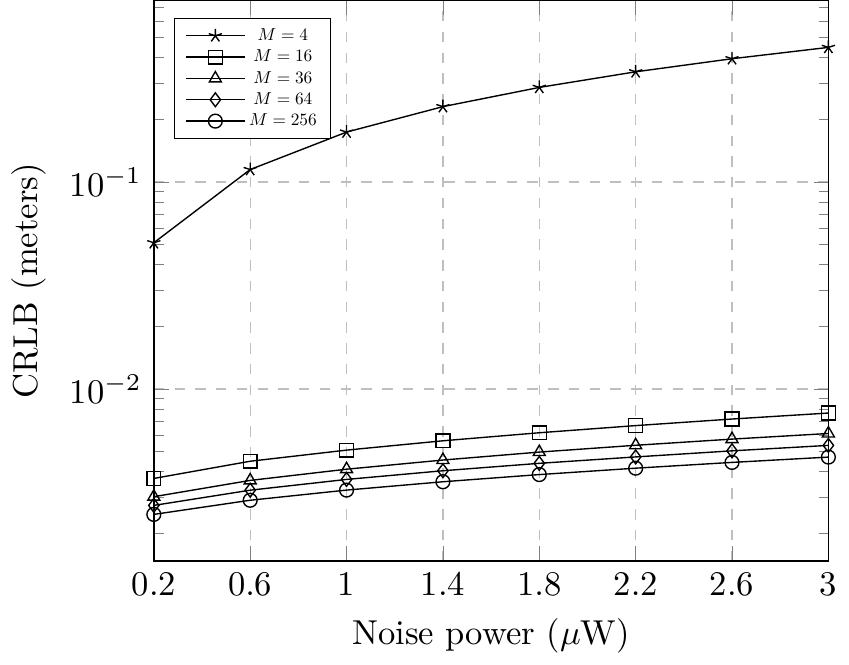} 
	 		\caption{CRLB as a function of noise power} \label{fig4a}
	 	\end{subfigure}
	 	\vspace{-3cm}
	 	\begin{subfigure}[t]{0.5\textwidth}
	 		\includegraphics[width=\textwidth]{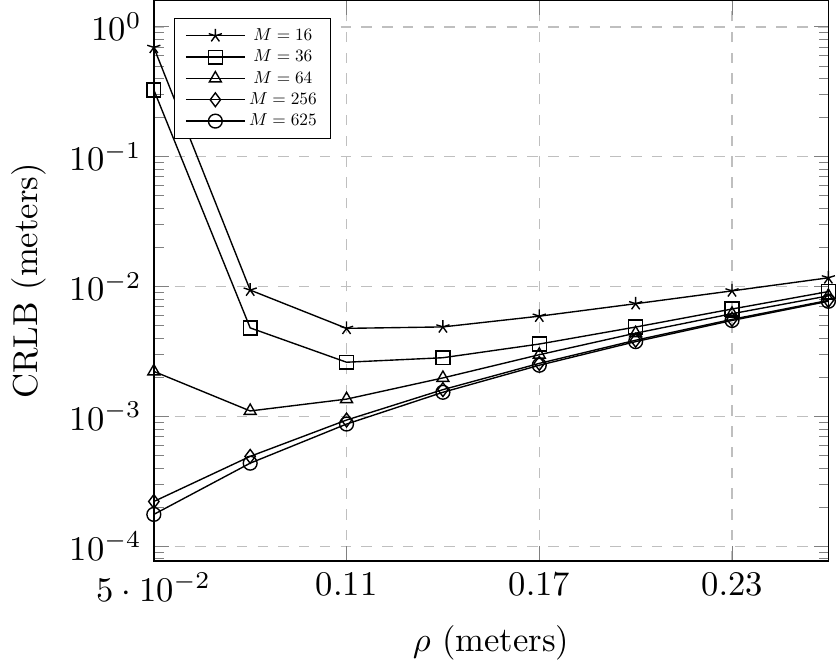} 
	 		\caption{CRLB as a function of $\rho$} \label{fig4b}
	 	\end{subfigure}
	 	\vspace{2.8cm}
	 	\caption{Fig.~\ref{fig4a} depicts the CRLB  of beam position estimators as a function of noise power for different detector arrays. The signal power is 1 $\mu$W, the noise power is varied between 0.2 and 3.0 $\mu$W, and $\rho$ is fixed at 0.2 meters.   Fig.~\ref{fig4b} shows the CRLB plots as a function of beam radius $\rho$. The noise power was fixed at 1.8 $\mu$W in this case. For both the figures, $(x_0, y_0) = (0.15, 0.15)$ and $|\mathcal{A}| = 4$ square meters.} \label{fig4}
	 \end{figure}
	 
	\section{High Complexity Trackers} \label{HCT}
	In this section, we take a look at two high complexity beam position trackers, namely the \emph{nonlinear least squares} (NLS) estimator and the \emph{maximum likelihood estimator} (MLE). As we will see later, these estimators will provide a better mean-square error performance than the low complexity estimators. However, the better performance of these estimators comes with a higher computational complexity, mainly for the following two reasons:
	\begin{enumerate}
		\item Both NLS and MLE are computed as a point in the parameter space where a numerical optimization algorithm converges while maximizing/minimizing a certain objective function (e.g., likelihood function in the case of MLE). An example of such numerical  algorithms is the evolutionary algorithms (genetic algorithm, differential evolution algorithm).
		\item These estimators require estimation of additional beam parameters. In this case, we need to estimate the values of $I_0$, $\rho$ and $\lambda_n$ at the receiver. 
	\end{enumerate}
The next section discusses the estimation of $I_0$ and $\lambda_n$ that is based on the low complexity method of moments estimator.  Unfortunately, estimation of $\rho$ is not so straightforward. For channels that are not marred significantly by turbulence or scattering, $\rho$ can be approximated from the expression $\rho(d) = \rho_0 \sqrt{1 + \left(\frac{\lambda d}{\pi \rho_0^2} \right)^2}$ if the link distance $d$ is known. Otherwise, we will have to estimate it as an unknown parameter alongside the beam center position on the array.

In order to simplify the expressions for the upcoming NLS and MLE estimators, we will replace the Gaussian integrals with the distribution functions of a standard normal random variable. In this regard, we know that the density function of the photon count in the $m$th cell is 
\begin{align}
P\left( \{Z_m = z_m \}  \right) = \frac{e^{-\Lambda_m} \Lambda_m^{z_m}}{z_m!}, \; m=1, \dotsc, M,
\end{align}
where $z_m \in \mathbb{Z}^+ \cup \{0\}$ and $\Lambda_m$ is defined in \eqref{intensity1}.  After a few easy manipulations, $\Lambda_m$ can be simplified as
\begin{align}
\Lambda_m(x_0, y_0) = I_0 2 \pi  \left[  \Phi\left( \frac{y_{m_2} - y_0 }{\rho}  \right) - \Phi\left( \frac{y_{m_1} -y_0}{\rho} \right)   \right] \left[  \Phi\left(  \frac{x_{m_2}-x_0}{\rho} \right) - \Phi\left( \frac{x_{m_1}-x_0}{\rho} \right)   \right] + \lambda_n A,
\end{align}
where $\Phi(x)$ is the distribution function of a standard normal random variable, and \newline $(x_{m_2}, y_{m_2}),  (x_{m_1}, y_{m_2}), (x_{m_2}, y_{m_1}), (x_{m_1}, y_{m_1})$ are the coordinates of the square region $A_m$ such that $x_{m_2} > x_{m_1}$ and $y_{m_2} > y_{m_1}$. 

In the analysis that follows, let us call the center of the $m$th detector $(x_m, y_m)$.

	\subsection{Method of Moments Estimator of $I_0$ and  $\lambda_n$}
	
	
	 In order to use the naive \emph{method of moments} estimator that can estimate  $I_0$ and $\lambda_n$, we send $N$ pulses of signal in $N$ slots of time (``signal+noise'' slots). Moreover, there is another set of $N$ slots in which we do not transmit anything (``noise only'' slots). The width of the pulse and empty slot is the same: $T_p$. Then, by the \emph{strong law of large numbers}, the sample average converges to the true average \emph{almost surely}, and the method of moments estimate of $\lambda_n$ is defined to be
	\begin{align} \label{1}
	\hat{\lambda}_n \triangleq  \frac{1}{|\mathcal{A}| N } \sum_{i=1}^{N} \sum_{m=1}^M z_{m,i}^{(n)},
	\end{align} 
	where $N$ is the total number of observation intervals used for the estimation of $\lambda_n$.
	The quantity $\mathcal{A}$ represents the detector array region, and $|\mathcal{A}| \triangleq \bigcup_{m=1}^M A_m$ represents the total area of the detector array. The random variable $z_{m,i}^{(n)}$ is the noise photon count in the $m$th cell during the $i$th observation interval where these observation intervals correspond to the ``noise only'' slots.
	
	  By the same argument, the method of moments estimate of the signal intensity $I_0$ is
	\begin{align} 
	\hat{I}_0 \triangleq \frac{1} { \Lambda_0 N } \sum_{i=1}^{N} \sum_{m=1}^M\left( z_{m,i}^{(s)} + z_{m,i}^{(n)} \right) - \frac{\hat{\lambda}_n |\mathcal{A}|}{\Lambda_0}. \label{intensity2}
	\end{align}
	 where $\Lambda_0 \triangleq \iint_{\mathcal{A}} \frac{1}{\rho^2} \exp\left(- \frac{(x-x_0)^2+(y-y_0)^2}{2\rho^2} \right)\, dx\, dy$.  The number $z_{m,i}^{(s)}$ corresponds to the signal photons generated in the $m$th detector during the $i$th slot. It is important to know that the count $\sum_{m=1}^M \left( z_{m,i}^{(s)} + z_{m_n}^{(i)} \right)$ in \eqref{intensity2} results assuming that the entire footprint of the beam is captured on the array when the beam center is pointing at the initial estimate of the receiver position.
	 
	 We note that in \eqref{intensity2}, $\Lambda_0$ is a constant with respect to $\rho$. Hence, the estimation of $I_0$ can be carried out independently of the value of $\rho$  at the receiver. 
	
	It is important to note that we have assumed that $I_0$ and $\lambda_n$ remain constant during the duration $NT_p.$ This assumption roughly states that the coherence time of the signal fade is many orders of magnitude larger than $T_p$. This is a fair assumption for high speed data communications in free-space optics where $T_p$ is typically on the order of a fraction of a microsecond.
	
	It can be easily shown that $\hat{I}_0^{(N)}$ and $\hat{\lambda}_n^{(N)}$ are unbiased estimators, and $\hat{I}_0^{(N)} \longrightarrow I_0$ and $\hat{\lambda}_n^{(N)} \longrightarrow \lambda_n$ almost surely as $N \longrightarrow \infty$.

	\subsection{Nonlinear Least Squares Estimator of Beam Position}
	Assuming that $I_0$, $\rho$ and $\lambda_n$ have already been estimated, the NLS estimator of $(x_0, y_0)$ is proposed as follows.
	\begin{align}
	&(\hat{x}_0, \hat{y}_0 ) \nonumber \\
	&\triangleq \argmin_{ x_0, y_0} \sum_{m=1}^{M} \left(z_m\! -\! I_0 2 \pi  \left[  \Phi\left( \frac{y_{m_2} - y_0 }{\rho}  \right) - \Phi\left( \frac{y_{m_1} -y_0}{\rho} \right)   \right]\! \left[  \Phi\left(  \frac{x_{m_2}-x_0}{\rho} \right) - \Phi\left( \frac{x_{m_1}-x_0}{\rho} \right)   \right] \!- \! \lambda_n A \! \right)^2, \label{estimate}
	\end{align}
	 
	\subsection{Maximum Likelihood Estimator of Beam Position}
	The maximum likelihood estimator of beam position on the array is given by \cite{Bashir1} 
	\begin{align}
	&(\hat{x}_0, \hat{y}_0)  \triangleq \argmax_{ x_0, y_0} \ln  p\left(z_1, z_2,\dotsc, z_M|   x_0, y_0 \right) \nonumber \\
	& = \argmax_{ x_0, y_0} \sum_{m=1}^M   z_m \ln \left(  I_0 2 \pi  \left[  \Phi\left( \frac{y_{m_2} - y_0 }{\rho}  \right) - \Phi\left( \frac{y_{m_1} -y_0}{\rho} \right)   \right] \left[  \Phi\left(  \frac{x_{m_2}-x_0}{\rho} \right) - \Phi\left( \frac{x_{m_1}-x_0}{\rho} \right)   \right] + \lambda_n A \right)\nonumber\\
	&-\left( I_0 2 \pi  \left[  \Phi\left( \frac{a - y_0 }{\rho}  \right) - \Phi\left( \frac{-a -y_0}{\rho} \right)   \right] \left[  \Phi\left(  \frac{a-x_0}{\rho} \right) - \Phi\left( \frac{-a-x_0}{\rho} \right)   \right] + \lambda_n |\mathcal{A}| \right) \label{mle}
	\end{align}
		\section{Low Complexity Trackers} \label{LCT}
		In this section, we take a look at a number of low complexity beam position estimators. These estimators are just simple transformations of the photon count vector $\bZ$. Thus, with the exception of the asymptotic unbiased centroid estimator, they do not require the values of beam parameters to compute the estimate.
		
	\subsection{ Maximum Detector Count (MDC) Estimator}
		The \emph{maximum detector count} (MDC) estimator chooses the center of the cell in which the maximum photon count occurs (during some observation interval) as the estimate of the beam position. Let $(x_m, y_m)$ be the center of the $m$th detector, and $\displaystyle n \triangleq \argmax_m Z_m$ for $m=1, \dotsc, M$. Then, the MDC estimator is  defined as 
		\begin{align}
		(\hat{x}_0, \hat{y}_0 ) = (x_{n}, y_{n}).
		\end{align}
		The conditional likelihood of $(\hat{x}_0, \hat{y}_0)$ is bounded as
	\begin{align}
	&p_{\hat{x}_0 \hat{y}_0}(x_m, y_m |x_0, y_0) > P\left( \{ \text{All events such that the photon count in the } m\text{th detector is the greatest}   \} \right) \nonumber \\
	&= P\left( \bigcup_{z_m=1}^\infty \{ Z_m = z_m \} \bigcap \left\{ \bigcap_{\substack {i=1 \\ i\neq m} }^M \{Z_i < z_m \}  \right\}  \right) = \sum_{z=1}^\infty P\left( \{ Z_m =z_m \} \bigcap \left\{  \bigcap_{\substack {i=1 \\ i\neq m} }^M \{Z_i < z_m \}   \right\} \right),  \nonumber \\
	&= \sum_{z_m=1}^\infty P\left( \{ Z_m = z_m \}  \right) \times P\left( \left\{ \bigcap_{\substack {i=1 \\ i\neq m} }^M \{Z_i < z_m \}  \right\} \right)= \sum_{z_m=1}^\infty P\left( \{ Z_m = z_m \}  \right) \times \prod_{\substack {i=1 \\ i\neq m} }^M P\left( \{  Z_i < z_m \}  \right), \nonumber \\
	&= \sum_{z=1}^\infty P\left( \{ Z_m = z_m \}  \right) \times \prod_{\substack {i=1 \\ i\neq m} }^M F_{Z_i}(z_m-1)= \sum_{z_m=1}^\infty \frac{e^{-\Lambda_m} \Lambda_m^z}{z!} \times \prod_{\substack {i=1 \\ i\neq m} }^M \sum_{j=0}^{z_m-1} \frac{e^{-\Lambda_i} \Lambda_i^j}{j!}  \nonumber \\
	&= \underbrace{e^{-\Lambda_m} \sum_{z_m=1}^\infty \left( \frac{\Lambda_m^{z_m}}{z_m!} \times    \prod_{\substack {i=1 \\ i\neq m} }^M Q(z_m, \Lambda_i) \right)}_{P_{1,m}}, \label{max_count}
	\end{align}
	where $Q(z, \Lambda_i)$ is the \emph{regularized Gamma function} and is defined as 
	$
	Q(x, y) \triangleq \frac{\Gamma(x,y)}{\Gamma(x)}, \label{rgamma}
	$
	where $\Gamma(x, y)$ is the \emph{upper incomplete Gamma function}:
	$
	\Gamma(x, y) \triangleq   \int_{y}^{\infty} t^{x-1} e^{-t}\, dt,
	$
	and $ \Gamma(x) \triangleq \int_{0}^{\infty} t^{x-1} e^{-t}\, dt$.
	In the analysis discussed above, we assume that the probability of the  event that two or more detectors report an equal number of (maximum) photon count is small, and the lower bound is tight. However, in case such an event arises, we randomly choose the center of a cell among all cells that report the maximum photon count as our MDC estimator.  The lower bound  in \eqref{max_count} can be improved if we include the possibility of two or more detectors obtaining the same maximum. If we define,
	\begin{align}
P_{2,m} &\triangleq \sum_{z_m=1}^\infty \sum_{\substack{m_1=1 \\ m_1\neq m}}^M  \left( \frac{1}{2}\left( e^{-\Lambda_m}   \frac{\Lambda_m^{z_m}}{z_m!} \times e^{-\Lambda_{m_1}}   \frac{\Lambda_{m_1}^{z_m}}{z_m!} \right) \times    \prod_{\substack {i=1 \\ i\neq m\\ i \neq m_1} }^M Q(z_m, \Lambda_i) \right),\\
P_{3,m}& \triangleq \sum_{z_m=1}^\infty \sum_{\substack{m_1=1 \\ m_1\neq m}}^M \sum_{\substack{m_2=1 \\ m_2 \neq m\\ m_2 \neq m_1}}^M  \left( \frac{1}{3}\left( e^{-\Lambda_m}   \frac{\Lambda_m^{z_m}}{z_m!} \times e^{-\Lambda_{m_1}}   \frac{\Lambda_{m_1}^{z_m}}{z_m!} \times e^{-\Lambda_{m_2}}   \frac{\Lambda_{m_2}^{z_m}}{z_m!} \right) \times    \prod_{\substack {i=1 \\ i\neq m\\ i \neq m_1 \\ i \neq m_2} }^M Q(z_m, \Lambda_i) \right),
	\end{align}
	and for any integer $k$ such that $1<k\leq M$, 
	\begin{align}
	P_{k,m}& \triangleq \sum_{z_m=1}^\infty \sum_{\substack{m_1=1 \\ m_1\neq m}}^M \sum_{\substack{m_2=1 \\ m_2 \neq m\\ m_2 \neq m_1}}^M \dotsm \sum_{\substack{m_{k-1}=1 \\ m_{k-1} \neq m\\ m_{k-1} \neq m_1 \\ \vdots \\ m_{k-1} \neq m_{k-2}}}^M \!\!\left(\! \frac{1}{k}\left( e^{-\Lambda_m}   \frac{\Lambda_m^{z_m}}{z_m!} \times e^{-\Lambda_{m_1}}   \frac{\Lambda_{m_1}^{z_m}}{z_m!} \times \dotsm \times e^{-\Lambda_{m_{k-1}}}   \frac{\Lambda_{m_{k-1}}^{z_m}}{z_m!} \right) \! \times  \!\!\!\!  \prod_{\substack {i=1 \\ i\neq m\\ i \neq m_1\\ \vdots \\ i \neq m_{k-1} } }^M \! \!\! \!Q(z_m, \Lambda_i) \!\right). \label{MDC}
	\end{align}
	Therefore,
	\begin{align}
	p_{\hat{x}_0 \hat{y}_0}(x_m, y_m|x_0, y_0) = \sum_{n=1}^M P_{n,m},
	\end{align}
	where, it should be noted that $P_{n,m}$ is a function of $(x_0, y_0)$ through $\Lambda_m$. The mean-square error is given by 
	\begin{align}
	&\E[(\hat{x}_0 - x_0)^2 + (\hat{y}_0-y_0)^2] = \sum_{m=1}^M \sum_{n=1}^M \left( (x_m-x_0)^2 + (y_m-y_0)^2 \right) P_{n,m}.  \label{mse}
	\end{align}
	We can compute the bias functions $\E[\hat{x}_0-x_0]$ and $\E[\hat{y}_0 - y_0]$ in a similar fashion as \eqref{mse}. For instance, 
	\begin{align}
	\E[\hat{x}_0 - x_0] = \sum_{m=1}^M \sum_{n=1}^M \left( x_m-x_0 \right) P_{n,m}. \label{bias_mdc}
	\end{align}
	 
	\subsection{Centroid Estimator}
	The centroid estimate of the beam position is given by 
	\begin{align}
	\hat{x}_0 &\triangleq  \frac{1}{Z_s}\sum_{m=1}^M x_m Z_m,  \quad \hat{y}_0 &\triangleq  \frac{1}{Z_s}\sum_{m=1}^M y_m Z_m.
	\end{align}
	where $Z_s \triangleq \sum_{m=1}^M Z_m$. The mean-square error of the centroid estimator is derived as follows.
	\begin{align}
	&\E[(\hat{x}_0-x_0)^2 + (\hat{Y}_0-y_0)^2]= \E[(\hat{x}_0-x_0)^2] + \E[(\hat{y}_0 - y_0)^2] \nonumber \\
	&= \sum_{z_s=0}^\infty \left( \E[(\hat{x}_0-x_0)^2| Z_s = z_s] + \E[(\hat{y}_0-y_0)^2| Z_s = z_s] \right) P(\{ Z_s = z_s \}) \label{mse_cent}
	\end{align}
	where \begin{align}
	P(\{  Z_s = z_s \}) = e^{-\Lambda_s} \frac{\Lambda_s^{z_s}}{z_s!}
	\end{align}
	and $\Lambda_s \triangleq \sum_{m=1}^M \Lambda_m$.  Thus,
	\begin{align}
	& \E[(\hat{x}_0 - x_0)^2 + (\hat{y}_0 - y_0)^2| Z_s = z_s] = \E[\hat{x}_0^2 | Z_s = z_s] - 2x_0 \E[\hat{x}_0 | Z_s = z_s] + x_0^2 \nonumber \\
	&+ \E[\hat{y}_0^2 | Z_s = z_s] - 2y_0 \E[\hat{y}_0 | Z_s = z_s] + y_0^2\nonumber \\
	& = \frac{1}{z_s^2} \sum_{m=1}^M \sum_{\substack{n=1 \\ n \neq m} }^M x_m x_n \E[Z_m Z_n| Z_s = z_s] + \frac{1}{z_s^2}\sum_{m=1}^M \E[Z_m^2| Z_s = z_s] - 2x_0 \frac{1}{z_s}\sum_{m=1}^M x_m \E[Z_m | Z_s = z_s] + x_0^2 \nonumber \\
	&+\frac{1}{z_s^2} \sum_{m=1}^M \sum_{\substack{n=1 \\ n \neq m} }^M y_m y_n \E[Z_m Z_n| Z_s = z_s] + \frac{1}{z_s^2} \sum_{m=1}^M \E[Z_m^2| Z_s = z_s] - 2y_0 \frac{1}{z_s}\sum_{m=1}^M y_m \E[Z_m | Z_s = z_s] + y_0^2. \label{cent}
	\end{align}
	In order to compute the first and second order conditional expectations in \eqref{cent}, we note the fact that given $Z_s = z_s$, $Z_1, Z_2, \dotsc, Z_M$ are binomial random variables, and the conditional pmf of $Z_m$ is defined as
	\begin{align}
	P(\{ Z_m = z_m \} | \{ Z_s = z_s \}) = \binom{z_s}{z_m} p_m^{z_m} (1-p_m)^{z_s - z_m}, \quad m=1, \dotsc, M, \label{meanx}
	\end{align}
	where $\displaystyle p_m \triangleq \frac{\Lambda_m}{\Lambda_s}$. Moreover, conditioned on the fact that $Z_s = z_s,$ any pair of random variables $Z_m$ and $Z_n$ for $m \neq n$ are not independent. Therefore, it may not be the case that
	\begin{align}
	 \E[Z_m Z_n | Z_s = z_s] \neq \E[Z_m | Z_s = z_s] \E[Z_n | Z_s = z_s].
	\end{align} 
	However, it can be shown that
	\begin{align}
	P( \{Z_m = z_m,  Z_n = z_n \}| \{Z_s = z_s\} )= \binom{z_s}{z_m} p_m^{z_m} \binom{z_s -z_m}{z_n} p_n^{z_n} p_r^{z_s - z_m -z_n}, \; z_m \neq z_n, z_m + z_n \leq z_s,
	\end{align}
	where $p_r$ corresponds to the probability that a photodetection occurs in the region $A_r \triangleq \mathcal{A} - A_m - A_n$, and is defined as $\displaystyle p_r \triangleq  \left({\sum_{ \substack{ i = 1 \\ i \neq m \\ i \neq n}}^M \Lambda_i } \right) / {\Lambda_s}$. The joint expectation is given by
	\begin{align}
	\E[Z_m Z_n | Z_s = z_s] = \sum_{z_m = 0}^{z_s} \sum_{z_n =0}^{z_s - z_m}   z_m z_n \binom{z_s}{z_m} \binom{z_s -z_m}{z_n} p_m^{z_m}  p_n^{z_n} p_r^{z_s - z_m -z_n}, \; z_m \neq z_n, z_m + z_n \leq z_s. \label{join_exp}
	\end{align}
	Moreover, when $Z_m = Z_n$,
	\begin{align}
	\E[Z_m^2 | Z_s = z_s] = z_s p_m (1-p_m) + (z_s  p_m)^2. \label{joint_exp1}
	\end{align}
	
Additionally, from \eqref{meanx}, 
\begin{align}
\E[\hat{x}_0|Z_s = z_s] =\frac{1}{z_s}\sum_{m=1}^{M}z_s p_m x_m = \sum_{m=1}^M x_m p_m. \label{mean}
\end{align}
which is not a function of $z_s$. Therefore, 
\begin{align}
\E[\hat{x}_0] = \sum_{m=1}^M x_m p_m, \; \E[\hat{y}_0] = \sum_{m=1}^M y_m p_m. \label{mean1}
\end{align} 

Also, \eqref{join_exp}, \eqref{joint_exp1} and \eqref{mean} can be substituted into \eqref{cent} in order to evaluate the conditional mean-square error. Finally, the mean-square error of the centroid estimator is evaluated using \eqref{mse_cent}.  

\subsection{Asymptotically Unbiased Centroid (AUC) Estimator}
\begin{theorem}
 If the values of $I_0$ and $\lambda_n$ are known, $\rho$ is much smaller than the dimensions of the array $(\rho << a)$, and $(x_0, y_0)$ is within the bounds of the array, then an unbiased centroid estimator of the beam position can be realized in the limit as $M \to \infty$. The asymptotically unbiased  centroid estimator is defined as
\begin{align}
\hat{x}_0 \triangleq  \mathcal{K} \frac{1}{Z_s}\sum_{m=1}^M x_m Z_m, \quad  \hat{y}_0 \triangleq \mathcal{K} \frac{1}{Z_s}\sum_{m=1}^M y_m Z_m.
\end{align}
where $\mathcal{K} \triangleq \frac{\Lambda_s } {2 \pi I_0 }$. 
\end{theorem}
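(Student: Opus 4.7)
The plan is to start from the general (unnormalized) mean expression already derived in the paper and show that, after multiplication by $\mathcal{K}$ and passage to the continuum limit, the array-boundary effects and the additive noise contribution both drop out, leaving $x_0$.

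Concretely, I would first recall from \eqref{mean1} that the ordinary centroid satisfies $\E\!\left[\frac{1}{Z_s}\sum_m x_m Z_m\right] = \sum_{m=1}^M x_m p_m = \frac{1}{\Lambda_s}\sum_{m=1}^M x_m \Lambda_m$. Multiplying by $\mathcal{K}=\Lambda_s/(2\pi I_0)$, the unknown $\Lambda_s$ cancels and I obtain
\begin{align}
\E[\hat{x}_0] = \frac{1}{2\pi I_0}\sum_{m=1}^M x_m \Lambda_m.
\end{align}
This step already bypasses the awkward $Z_s$ in the denominator and reduces the claim to a deterministic limit question about $\sum_m x_m \Lambda_m$.

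Next, I would invoke the same Riemann-sum argument used in the high-SNR subsection of the CRLB analysis: as $M\to\infty$ with $|\mathcal{A}|$ fixed, the cell areas $\Delta_M\to 0$, and $\Lambda_m \approx \bigl(\tfrac{I_0}{\rho^2}e^{-[(x_m-x_0)^2+(y_m-y_0)^2]/(2\rho^2)} + \lambda_n\bigr)\Delta_M$, so the finite sum converges to the Riemann integral
\begin{align}
\frac{1}{2\pi I_0}\iint_{\mathcal{A}} x\left(\frac{I_0}{\rho^2}e^{-\frac{(x-x_0)^2+(y-y_0)^2}{2\rho^2}}+\lambda_n\right)dx\,dy.
\end{align}
The noise contribution $\frac{\lambda_n}{2\pi I_0}\iint_{\mathcal{A}} x\,dx\,dy$ vanishes because the array is centered at the origin and symmetric in $x$ (this is why the hypothesis that the array is the symmetric square $[-a,a]^2$ matters here, not just that $(x_0,y_0)$ lies inside it).

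For the remaining signal term, I would rewrite it as $\iint_{\mathcal{A}} x \cdot \frac{1}{2\pi \rho^2}e^{-\frac{(x-x_0)^2+(y-y_0)^2}{2\rho^2}}\,dx\,dy$, which is exactly $\E[X]$ for a bivariate Gaussian $(X,Y)\sim\mathcal{N}((x_0,y_0),\rho^2 I)$ truncated to $\mathcal{A}$. The hypotheses $\rho \ll a$ and $(x_0,y_0)\in\mathcal{A}$ make this truncation negligible: the tail probability outside $\mathcal{A}$ is exponentially small in $a/\rho$, so the truncated mean differs from the full-plane mean $x_0$ by a quantity that tends to zero. Extending to $\mathbb{R}^2$ therefore yields $x_0$, and by symmetry the same argument gives $\E[\hat{y}_0]\to y_0$.

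The main obstacle, which should be handled carefully but not exhaustively, is justifying the two limiting steps simultaneously: the Riemann-sum convergence of $\sum_m x_m \Lambda_m$ to the continuum integral requires $M\to\infty$, while the negligibility of truncation to $\mathcal{A}$ requires $\rho\ll a$. Since the latter is a stated hypothesis and the integrand $x\cdot e^{-\|\cdot\|^2/(2\rho^2)}$ is smooth and absolutely integrable, both approximations are controlled independently and can be combined without further subtlety. I would explicitly remark, as a sanity check, that if one dropped the factor $\mathcal{K}$ the residual bias would be $(\Lambda_s/(2\pi I_0)-1)x_0$, explaining the exact form of the correction constant.
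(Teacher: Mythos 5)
Your proposal is correct and takes essentially the same route as the paper's own proof: it starts from the conditional-mean identity behind \eqref{mean1}, passes to the $M\to\infty$ Riemann-integral limit over $\mathcal{A}$, eliminates the $\lambda_n$ term by the symmetry of the array about the origin, and identifies the remaining signal term as the mean of a Gaussian truncated to $\mathcal{A}$, which equals $x_0$ up to negligible error under $\rho \ll a$ with $(x_0,y_0)$ inside the array. The only differences are cosmetic: you apply the factor $\mathcal{K}$ before taking the limit rather than after, and you make explicit the truncation estimate that the paper leaves implicit.
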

 \begin{proof}
 	Consider the mean value of the centroid estimator in \eqref{mean1}. The expectation of $\hat{x}_0$ can be further expanded as
\begin{align}
\E[\hat{x}_0] = \frac{1}{\Lambda_s} \sum_{m=1}^M x_m \iint_{A_m} \left( \frac{I_0}{\rho^2} \exp\left(-\frac{(x-x_0)^2 + (y-y_0)^2}{2 \rho^2}\right) + \lambda_n \right) \, dx \, dy.
\end{align}
In the limit as $M \to \infty$,
\begin{align}
&\E[\hat{x}_0] = \frac{1}{\Lambda_s} \iint_{\mathcal{A}} x\left( \frac{I_0}{\rho^2} \exp\left(-\frac{(x-x_0)^2 + (y-y_0)^2}{2 \rho^2}\right) + \lambda_n \right) \, dx \, dy\\
&= \frac{2 \pi I_0}{\Lambda_s} \int_{-a}^a \int_{-a}^a    x \frac{1}{2 \pi \rho^2} \exp\left(-\frac{(x-x_0)^2 + (y-y_0)^2}{2 \rho^2}\right) \, dx \, dy   + \frac{\lambda_n}{\Lambda_s} \cancelto{0}{ \int_{-a}^a \int_{-a}^a    x  \, dx \, dy}\\
& = \frac{2 \pi I_0}{\Lambda_s} x_0.
\end{align}
In a similar fashion, $\E[\hat{Y}_0] = \frac{2 \pi I_0}{\Lambda_s} y_0$.

Therefore, multiplying the (regular) centroid estimator by the factor $\mathcal{K} = \frac{\Lambda_s}{2 \pi I_0}$ results in an unbiased estimator as $M \to \infty$. 
\end{proof}

\subsubsection{High Signal-To-Noise Ratio Case}Additionally, it can also be observed that for high signal-to-noise ratio, the factor 
\begin{align}
\frac{2\pi I_0}{\Lambda_s}& = \frac{2\pi I_0}{\iint_{\mathcal{A}} \left( \frac{I_0}{\rho^2} \exp\left(-\frac{(x-x_0)^2 + (y-y_0)^2}{2 \rho^2}\right) + \lambda_n \right) \, dx \, dy} \approx \frac{2\pi I_0}{\iint_{\mathcal{A}} \left( \frac{I_0}{\rho^2} \exp\left(-\frac{(x-x_0)^2 + (y-y_0)^2}{2 \rho^2}\right) \right) \, dx \, dy} = 1.
\end{align}
Hence, $\E[\hat{x}_0] = x_0$, and the centroid estimate is asymptotically unbiased for high SNR.

Let $Z_s' \triangleq \frac{Z_s}{\mathcal{K}}$. Then, $z_s' = 0, \frac{1}{\mathcal{K}}, \frac{2}{\mathcal{K}}, \dotsc$, and $P(\{ Z_s' = z_s'\}) = P(\{ Z_s = \mathcal{K} z_s' \})$. Furthermore, it is straightforward to show that 
\begin{align}
P(\{ Z_m = z_m \} | \{ Z_s' = z_s' \}) = \binom{\mathcal{K}z_s'}{z_m} p_m^{z_m} (1-p_m)^{\mathcal{K}z_s' - z_m}, \quad m=1, \dotsc, M, \label{meanx1}
\end{align}
where  $0 \leq z_m \leq \mathcal{K}z_s'$. The expressions for $\E[Z_m^2 | Z_s' = z_s']$ and $\E[Z_m Z_n | Z_s' = z_s']$ can be obtained similarly by replacing $z_s$ with $\mathcal{K}z_s'$ in \eqref{mean1} and \eqref{mean}, respectively. Finally, the conditional mean-square error, $\E[(\hat{x}_0 - x_0)^2 + (\hat{y}_0 - y_0)^2| Z_s' = z_s']$, is computed by replacing $Z_s$ and $z_s$ with $Z_s'$ and $z_s'$, respectively, in \eqref{cent}. Finally, for the AUC estimator,
\begin{align}
\E[\hat{x}_0] = \mathcal{K}\sum_{m=1}^M x_m p_m, \; \E[\hat{y}_0] = \mathcal{K}\sum_{m=1}^M y_m p_m. \label{mean2}
\end{align}

	\subsection{Adaptive Centroid Estimators (ACE)}\label{ace}
	An adaptive centroid estimator takes the average of nonlinearly weighted photon counts. Such an estimator is designed to weight the photon counts for those detectors more heavily where the signal beam is expected to reside, i.e, where the photon count is relatively larger. These estimators are robust in the sense that they do not require the knowledge of beam parameters for estimation purpose, and for small $M$, provide a mean-square error performance which is better than the AUC estimator.
	
	\subsubsection{Adaptive Centroid Estimator 1}
	The Aaptive Centroid Estimator 1 is a function of a positive real number $n$. It is defined as, 
	\begin{align}
	\hat{x}_0 \triangleq  \frac{1}{Z_{s}}\sum_{m=1}^M x_m Z_m^n, \quad \hat{y}_0 \triangleq \frac{1}{Z_{s}}\sum_{m=1}^M y_m Z_m^n,
	\end{align}
	where $n \geq 1 $ and $Z_s \triangleq \sum_{m=1}^M Z_m^n$.  We note that when $n=1$, we obtain the centroid estimator, and when $n \to \infty$, the maximum detector count estimator is realized. 
	
	\subsubsection{Adaptive Centroid Estimator 2}
	The Adaptive Centroid Estimator 2 is a function of $n$ and $N$ where $n \geq 1$, and $N \in \mathbb{Z}^+, N < M$. In this case, we use the $N$ largest order statistics of the observations $Z_1, Z_2, \dotsc, Z_M$ for the centroid estimator. The estimator is defined as
	\begin{align}
	\hat{x}_0 \triangleq  \frac{1}{Z_{s}}\sum_{m=M-N+1}^M x_m \left(Z^{(m)} \right)^n,  \quad \hat{y}_0 \triangleq \frac{1}{Z_{s}}\sum_{m=M-N+1}^M y_m \left(Z^{(m)}\right)^n,
	\end{align}
	where $Z^{(1)} \leq Z^{(2)} \leq \dotsm \leq Z^{(M)}$ are the order statistics of $Z_1, \dotsc, Z_M$. Furthermore, $Z_s \triangleq \sum_{m=M-N+1}^M \left(Z^{(m)} \right)^n$.
			
	\paragraph{Asymptotic Behavior} \label{asymptotics}As $M \to \infty$, $A \to 0$, which implies that 
	\begin{align}
	&P(\{Z_m = 0\}) = \exp({-\Lambda_m}) \approx  \exp\left(-\left[\frac{I_0}{\rho^2}e^{-\frac{(x_m-x_0)^2 + (y_m-y_0)^2}{2 \rho^2}} + \lambda_n\right] A \right) \\ \nonumber
	&= 1 -  \left[\frac{I_0}{\rho^2}e^{-\frac{(x_m-x_0)^2 + (y_m-y_0)^2}{2 \rho^2}} + \lambda_n\right] A + o(A) \approx 1 -  \left[\frac{I_0}{\rho^2}e^{-\frac{(x_m-x_0)^2 + (y_m-y_0)^2}{2 \rho^2}} + \lambda_n\right] A \quad \forall m
	\end{align}
	where $o(A)$ is a function such that $\displaystyle \lim_{A \to 0} \frac{o(A)}{A}=0$. In a similar fashion, it can be shown (for small $A$) that $ \displaystyle P(\{Z_m = 1 \}) \approx \left[\frac{I_0}{\rho^2}e^{-\frac{(x_m-x_0)^2 + (y_m-y_0)^2}{2 \rho^2}} + \lambda_n\right] A$, and $P(\{ Z_m = \ell \}) \approx 0$ for any integer $\ell > 1$ for any $m$. This implies that ACE1 and ACE2 converge to the centroid estimator (or the AUC if they are scaled by $\mathcal{K}$) as $M \to \infty$.

  	\begin{figure}
		\centering
		\hspace{-1cm}
		\begin{subfigure}[t]{0.5\textwidth}
			\includegraphics[width=\textwidth]{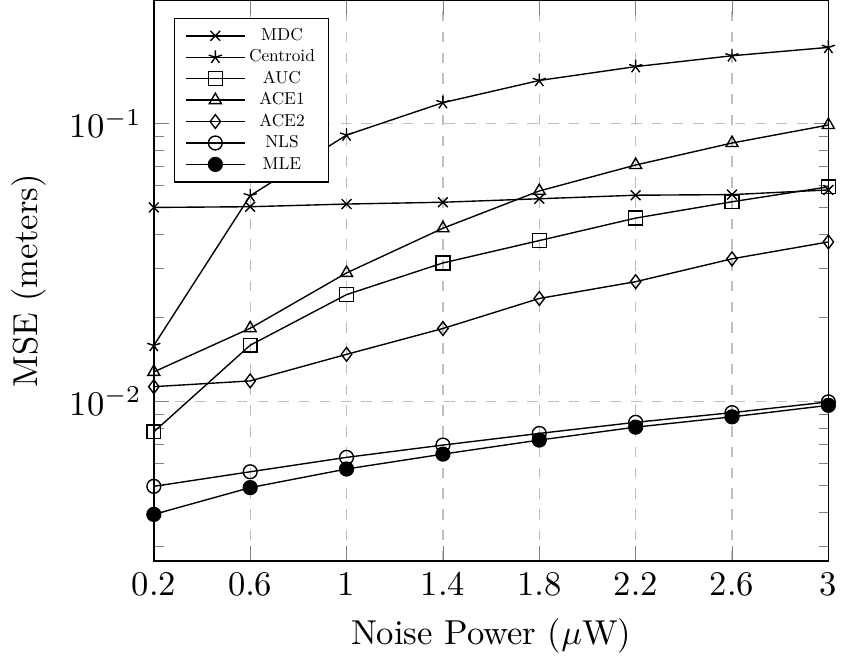} 
			\caption{MSE} \label{fig1a}
		\end{subfigure}
		\vspace{-3cm}
		\begin{subfigure}[t]{0.5\textwidth}
			\includegraphics[width=\textwidth]{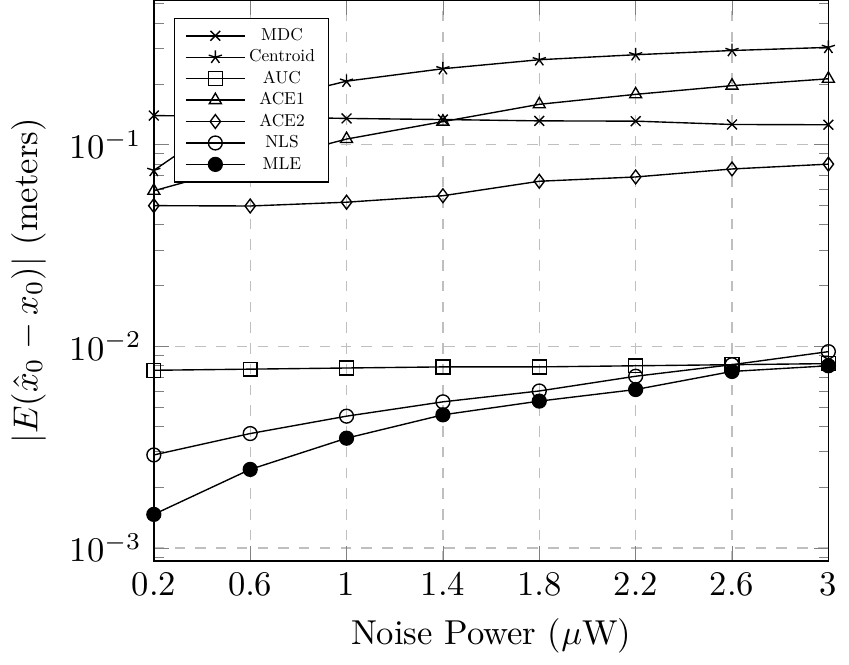} 
			\caption{Bias magnitude ($x$ axis)} \label{fig1b}
		\end{subfigure}
		\vspace{2.8cm}
		\caption{Fig.~\ref{fig1a} depicts the root mean-square error for different estimators of beam position for $4\times4$ detector array.  The signal power is 1 $\mu$W, and the noise power is varied between 0.2 and 1.8 $\mu$W. The beam was centered at $(0.4, 0.4)$,  and we used $n=2$ (second order) and $N=3$ for the ACE estimators. Fig.~\ref{fig1b} shows the magnitude of the bias plots along $x$ axis of different estimators.} \label{fig1}
	\end{figure}

\begin{figure}
	\centering
	\hspace{-1cm}
	\begin{subfigure}[t]{0.5\textwidth}
		\includegraphics[width=\textwidth]{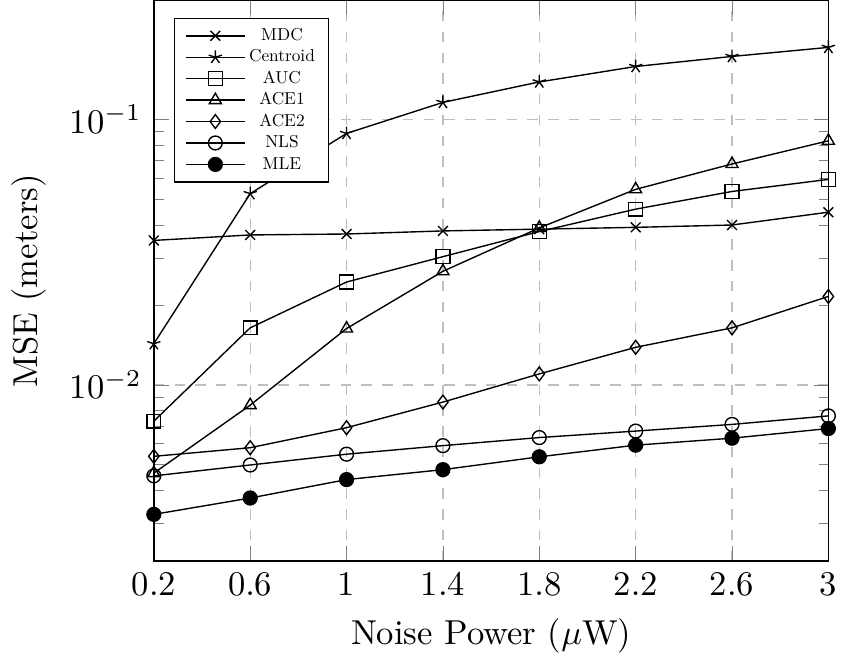} 
		\caption{MSE} \label{fig2a}
	\end{subfigure}
	\vspace{-3cm}
	\begin{subfigure}[t]{0.5\textwidth}
		\includegraphics[width=\textwidth]{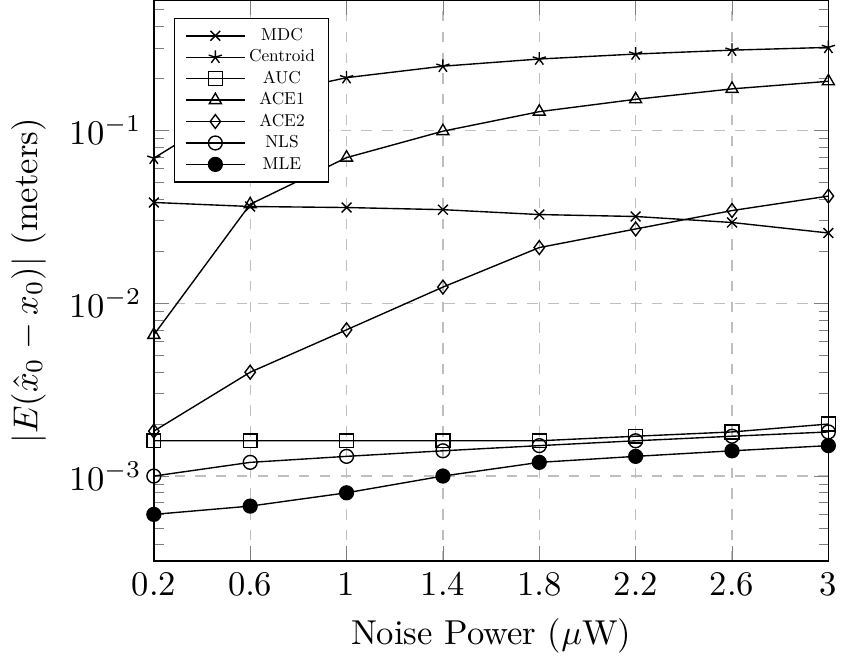} 
		\caption{Bias magnitude ($x$ axis)} \label{fig2b}
	\end{subfigure}
	\vspace{2.8cm}
	\caption{Fig.~\ref{fig2a} depicts the root mean-square error for different estimators of beam position for $6\times6$ detector array.  The signal power is 1 $\mu$W, and the noise power is varied between 0.2 and 1.8 $\mu$W. The beam was centered at $(0.4, 0.4)$,  and we used $n=2$ (second order) and $N=3$ for the ACE estimators. Fig.~\ref{fig2b} shows the magnitude of the bias plots along $x$ axis of different estimators.} \label{fig2}
\end{figure}

\begin{figure}
	\centering
	\hspace{-1cm}
	\begin{subfigure}[t]{0.5\textwidth}
		\includegraphics[width=\textwidth]{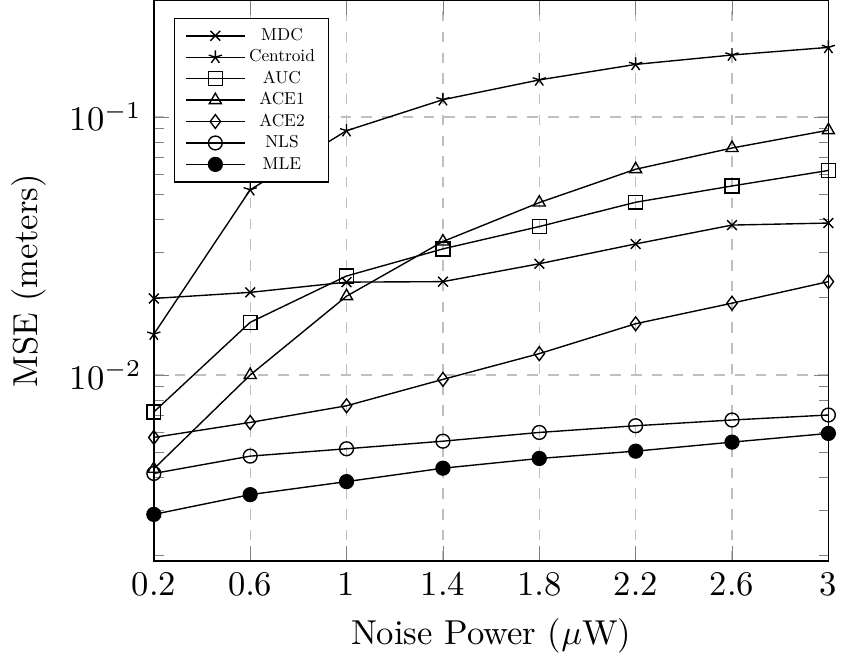} 
		\caption{MSE} \label{fig3a}
	\end{subfigure}
	\vspace{-3cm}
	\begin{subfigure}[t]{0.5\textwidth}
		\includegraphics[width=\textwidth]{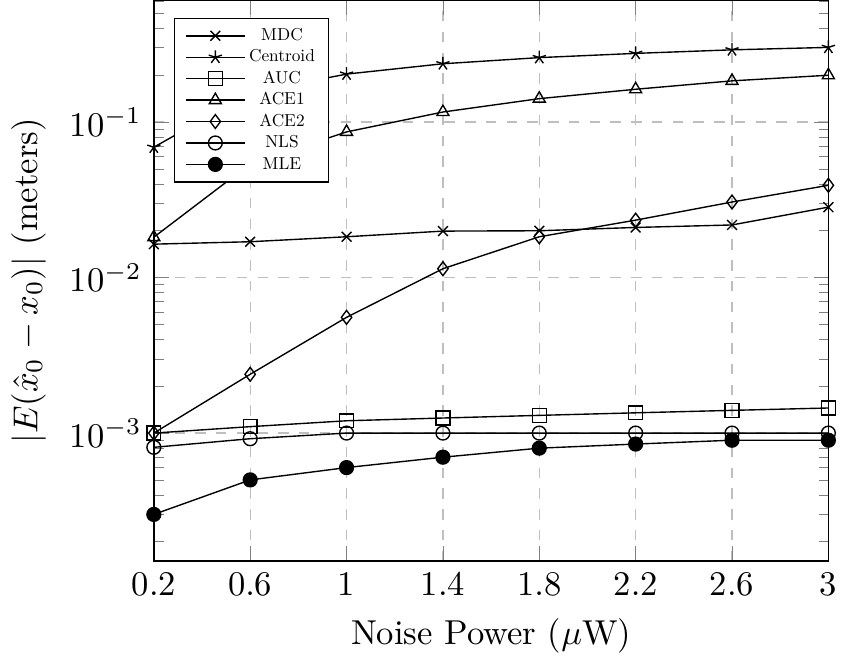} 
		\caption{Bias magnitude ($x$ axis)} \label{fig3b}
	\end{subfigure}
	\vspace{2.8cm}
	\caption{Fig.~\ref{fig3a} depicts the root mean-square error for different estimators of beam position for $8\times8$ detector array.  The signal power is 1 $\mu$W, and the noise power is varied between 0.2 and 1.8 $\mu$W. The beam was centered at $(0.4, 0.4)$,  and we used $n=2$ (second order) and $N=3$ for the ACE estimators. Fig.~\ref{fig3b} shows the magnitude of the bias plots along $x$ axis of different estimators.} \label{fig3}
\end{figure}

\begin{figure}
	\centering
	\hspace{-1cm}
	\begin{subfigure}[t]{0.5\textwidth}
		\includegraphics[width=\textwidth]{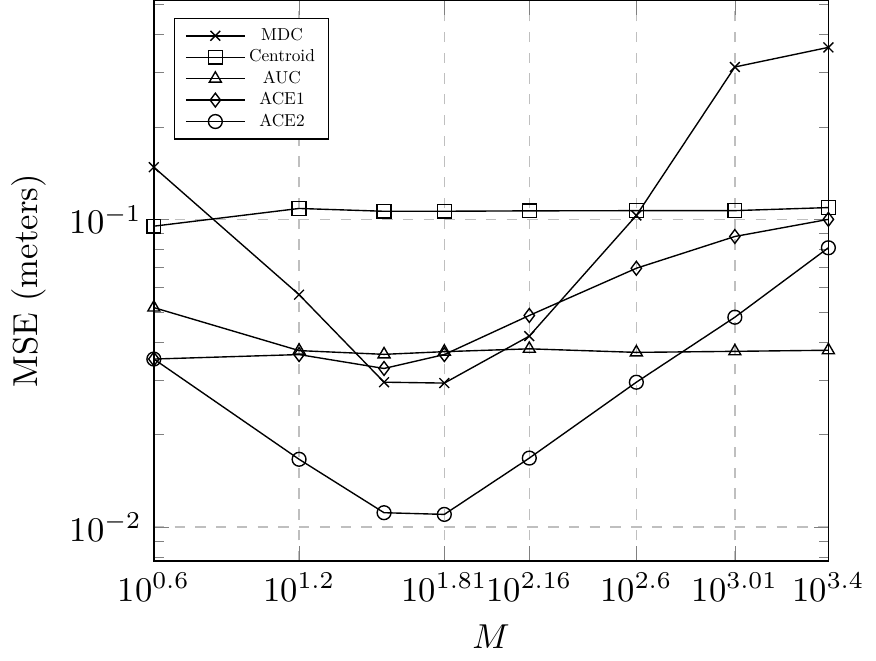} 
		\caption{Effect of $M$ on MSE} \label{fig5a}
	\end{subfigure}
	\vspace{-3cm}
	\begin{subfigure}[t]{0.5\textwidth}
		\includegraphics[width=\textwidth]{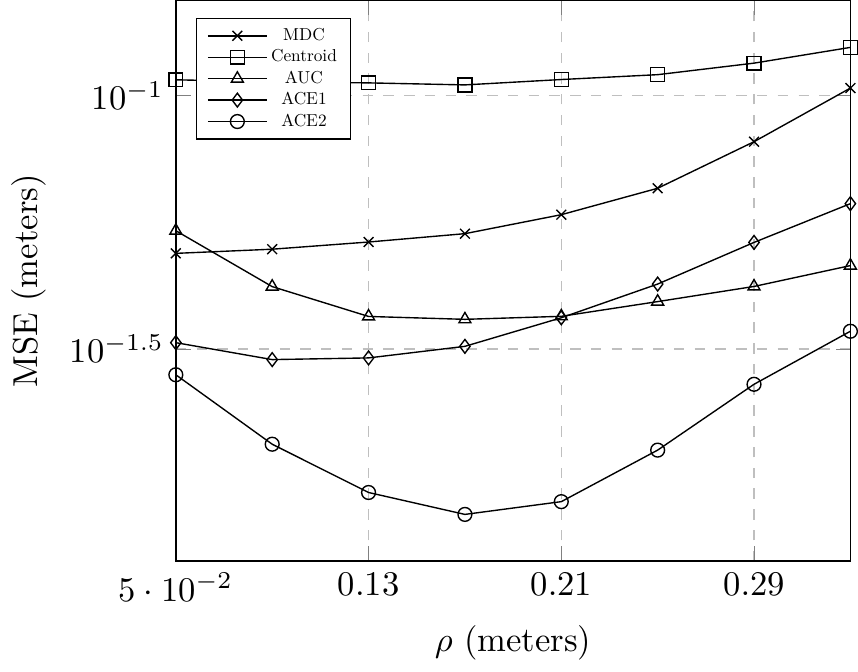} 
		\caption{Effect of $\rho$ on MSE} \label{fig5b}
	\end{subfigure}
	\vspace{2.8cm}
	\caption{Fig.~\ref{fig5a} depicts the effect of the number of detectors $M$ in the array on the MSE performance of different low complexity estimators. For this simulation, the signal power was fixed at 1 $\mu$W, noise power at 1.8 $\mu$W.   Fig.~\ref{fig5b} shows the MSE curves for different values of beam radius $\rho$ for similar signal and noise parameters and $M=16$. For both figures, $(x_0, y_0)$ was sampled randomly on the detector array.} \label{fig5}
\end{figure}

	\section{Probability of Error Performance}\label{PE}
In this section, we analyze the effect of beam position estimation on the probability of error for a PPM scheme. To this end, let us assume that we have a maximum likelihood receiver that operates on a symbol-by-symbol basis on a train of $\mathcal{M}$-PPM symbols. It is shown in \cite{Bashir3} that the probability of a correct decision, given a symbol $j$ is transmitted,  is
	\begin{align}
	P(c|j) = \left( P\left( \left\{ \sum_{m=1}^M \alpha_m Z_m^{(j)} - \sum_{m=1}^M \alpha_m Z_m^{(i)} > 0 \right\} \right) \right)^{\mathcal{M} - 1}, \label{mld}
	\end{align}
for $i, j = 1, 2, \dotsc, \mathcal{M}$ and $i \neq j$. The slot $j$ of the PPM symbol corresponds to the ``signal+noise'' slot, whereas $i$ corresponds to the ``noise only'' slot. The factor $\alpha_m$ is defined to be \cite{Bashir3}
\begin{align}
\alpha_m = \ln(1 + \text{SNR}_m),
\end{align}
where $\text{SNR}_m \triangleq \frac{\iint_{A_m}\frac{I_0}{\rho^2}e^{-\frac{(x-x_0)^2 + (y-y_0)^2}{2 \rho^2}}\, dx \, dy}{\lambda_n A}$ is the signal-to-noise ratio in the $m$th cell. Furthermore, $\E[Z_m^{(j)}] = \Lambda_m$ and $\E[Z_m^{(i)}] = \lambda_n A$. Let us define $Y_1 \triangleq \sum_{m=1}^M \alpha_m Z_m^{(j)}$ and $Y_0 \triangleq \sum_{m=1}^M Z_m^{(i)}$. By Gaussian approximation of a linear combination of Poisson random variables, both $Y_1$ and $Y_0$ are Gaussian random variables. Let $V\triangleq Y_1 - Y_0$. Then $V\sim \mathcal{N}\left(\mu_v, \sigma_v^2\right)$, where \begin{align}
\mu_v &\triangleq \sum_{m=1}^M \alpha_m \iint_{A_m}\frac{I_0}{\rho^2}e^{-\frac{(x-x_0)^2 + (y-y_0)^2}{2 \rho^2}}\, dx \, dy, \label{signal} \\
\sigma^2_v& \triangleq \sum_{m=1}^M \alpha_m^2 \iint_{A_m}\frac{I_0}{\rho^2}e^{-\frac{(x-x_0)^2 + (y-y_0)^2}{2 \rho^2}}\, dx \, dy + 2 \sum_{m=1}^M \alpha_m^2 \lambda_n A.\label{noise}
\end{align}
Thus, 
\begin{align}
P(c|j) \approx  \left( P\left( \{V > 0 \} \right) \right)^{\mathcal{M}-1}, \label{mld1}
\end{align}
 In order to maximize \eqref{mld1}, we need to maximize the factor $P(\{ V > 0\})$, which is given by
\begin{align}
P(\{V > 0 \}) = 1 - P(\{ V \leq 0\}) = 1 - \Phi\left( -\frac{\mu_v}{\sigma_v} \right). \label{mld3}
\end{align}
In order to maximize \eqref{mld3}, we need to maximize the factor $\frac{\mu_v}{\sigma_v}$ with respect to $(\hat{x}_0, \hat{y}_0)$. This factor is rewritten as
\begin{align}
\frac{\mu_v}{\sigma_v} = \frac{\sum_{m=1}^M \alpha_m \iint_{A_m}\frac{I_0}{\rho^2}e^{-\frac{(x-x_0)^2 + (y-y_0)^2}{2 \rho^2}}\, dx \, dy}{\sqrt{ \sum_{m=1}^M \alpha_m^2 \iint_{A_m}\frac{I_0}{\rho^2}e^{-\frac{(x-x_0)^2 + (y-y_0)^2}{2 \rho^2}}\, dx \, dy + 2 \sum_{m=1}^M \alpha_m^2 \lambda_n A}}.
\end{align}
When the SNR is low, $\lambda_n A >> \iint_{A_m}\frac{I_0}{\rho^2}e^{-\frac{(x-x_0)^2 + (y-y_0)^2}{2 \rho^2}}\, dx \, dy$ for each $m$. In this case, 
\begin{align}
\alpha_m = \ln\left( 1 + \frac{1}{\lambda_n A} \iint_{A_m}\frac{I_0}{\rho^2}e^{-\frac{(x-\hat{x}_0)^2 + (y-\hat{y}_0)^2}{2 \rho^2}}\, dx \, dy\right) \approx \frac{1}{\lambda_n A} {\iint_{A_m} \frac{I_0}{\rho^2}e^{-\frac{(x-\hat{x}_0)^2 + (y-\hat{y}_0)^2}{2 \rho^2}}\, dx \, dy} .
\end{align}
Additionally, we assume that $M$ is large, and let us denote the small area $A$ by $\Delta_M$. Then
\begin{small}
\begin{align}
&\frac{\mu_v}{\sigma_v}\nonumber \\
& \approx \! \frac{\frac{1}{\lambda_n \Delta_M}\sum_{m=1}^M  { \frac{I_0}{\rho^2}e^{-\frac{(x_m-\hat{x}_0)^2 + (y_m-\hat{y}_0)^2}{2 \rho^2}} \Delta_M}  \times \frac{I_0}{\rho^2}e^{-\frac{(x_m-x_0)^2 + (y_m-y_0)^2}{2 \rho^2}}\Delta_M}{\sqrt{ \sum_{m=1}^M \! \! \left(\frac{1}{\lambda_n \Delta_M} { \frac{I_0}{\rho^2}e^{-\frac{(x_m-\hat{x}_0)^2 + (y_m-\hat{y}_0)^2}{2 \rho^2}}\!\Delta_M} \!\right)^2\!\!\!\!\! \times\!  \frac{I_0}{\rho^2}e^{-\frac{(x_m-x_0)^2 + (y_m-y_0)^2}{2 \rho^2}}\!\Delta_M \!+ \!2  \lambda_n \Delta_M \! \sum_{m=1}^M \!\! \left(\!\frac{1}{\lambda_n \Delta_M} { \frac{I_0}{\rho^2}e^{-\frac{(x_m-\hat{x}_0)^2 + (y_m-\hat{y}_0)^2}{2 \rho^2}}\!\Delta_M} \!\!\right)^2 }}.
\end{align}
\end{small}
We assume that the search space for the maximization problem concerning $(\hat{x}_0, \hat{y}_0)$ is confined within $\mathcal{A}$, and that the assumption $2\pi \rho^2 << |\mathcal{A}|$ also holds. Then, the factor \begin{align}
&2  \lambda_n \Delta_M  \sum_{m=1}^M  \left(\frac{1}{\lambda_n \Delta_M} { \frac{I_0}{\rho^2}e^{-\frac{(x_m-\hat{x}_0)^2 + (y_m-\hat{y}_0)^2}{2 \rho^2}}\Delta_M} \right)^2 \approx
\frac{2}{\lambda_n} \iint_{\mathcal{A}} \left( \frac{I_0}{\rho^2} \right)^2 e^{-\frac{(x-\hat{x}_0)^2 + (y-\hat{y}_0)^2}{ \rho^2}}\, dx \, dy\nonumber \\
&= \frac{2 I_0^2}{\lambda_n \rho^4} 2 \pi \left(\frac{\rho}{\sqrt{2}}\right)^2 \iint_{\mathcal{A}} \frac{1}{2 \pi \left(\frac{\rho}{\sqrt{2}}\right)^2}  e^{-\frac{(x-\hat{x}_0)^2 + (y-\hat{y}_0)^2}{2 (\rho/\sqrt{2})^2}}\, dx \, dy = \frac{2 \pi I_0^2}{\lambda_n \rho^2}.
\end{align}
By further simplifications, it follows that
\begin{align}
\frac{\mu_v}{\sigma_v} \approx \frac{\frac{1}{\lambda_n} \iint_{\mathcal{A}} \left( \frac{I_0}{\rho^2}\right)^2 e^{-\frac{(x-\hat{x}_0)^2 + (y-\hat{y}_0)^2}{2 \rho^2}} e^{-\frac{(x-x_0)^2 + (y-y_0)^2}{2 \rho^2}} \, dx \, dy }   { \sqrt{ \frac{1}{\lambda_n^2} \iint_{\mathcal{A}} \left( \frac{I_0}{\rho^2}\right)^3 e^{-\frac{(x-\hat{x}_0)^2 + (y-\hat{y}_0)^2}{ \rho^2}} e^{-\frac{(x-x_0)^2 + (y-y_0)^2}{2 \rho^2}} \, dx \, dy + \frac{2 \pi I_0^2}{\lambda_n \rho^2} } }.\label{obj_fn}
\end{align}
The function given in \eqref{obj_fn} has to be maximized with respect to $(\hat{x}_0, \hat{y}_0)$. 
We first optimize with respect to $\hat{x}_0$, and due to the symmetric nature of the Gaussian beam, the same analysis will hold for optimization with respect to $\hat{y}_0$ as well. Thus, taking the natural log of \eqref{obj_fn} and setting the derivative (with respect to $\hat{x}_0$) of that to zero,  we have that,
\begin{small}
\begin{align}
&\frac{\partial \ln\left( \mu_v / \sigma_v\right)}{\partial \hat{x}_0}\nonumber \\
&= \frac{\iint_{\mathcal{A}} e^{-\frac{(x-\hat{x}_0)^2 + (y-\hat{y}_0)^2}{2 \rho^2}} \left(\frac{x-\hat{x}_0}{\rho^2} \right) e^{-\frac{(x-x_0)^2 + (y-y_0)^2}{2 \rho^2}} \, dx \, dy}{\iint_{\mathcal{A}} e^{-\frac{(x-\hat{x}_0)^2 + (y-\hat{y}_0)^2}{2 \rho^2}} e^{-\frac{(x-x_0)^2 + (y-y_0)^2}{2 \rho^2}} \, dx \, dy} -  \frac{\frac{1}{\lambda_n^2} \iint_{\mathcal{A}} \left( \frac{I_0}{\rho^2}\right)^3 e^{-\frac{(x-\hat{x}_0)^2 + (y-\hat{y}_0)^2}{ \rho^2}} \left( \frac{ x-\hat{x}_0}{\rho^2}\right) e^{-\frac{(x-x_0)^2 + (y-y_0)^2}{2 \rho^2}} \, dx \, dy}{\frac{1}{\lambda_n^2} \iint_{\mathcal{A}} \left( \frac{I_0}{\rho^2}\right)^3 e^{-\frac{(x-\hat{x}_0)^2 + (y-\hat{y}_0)^2}{ \rho^2}}  e^{-\frac{(x-x_0)^2 + (y-y_0)^2}{2 \rho^2}} \, dx \, dy + \frac{2 \pi I_0^2}{\lambda_n \rho^2}} \label{opt1} \\
&= 0. \label{opt}
\end{align}
\end{small}
It can be easily seen that $\hat{x}_0 = x_0$ causes both terms in \eqref{opt1} to go to zero, and therefore, this particular point is a solution. Hence, $\hat{x}_0 = x_0$ is a critical point (a local or global minimizer or maximizer). Similarly, $\hat{y}_0 = y_0$ is also a critical point. Additionally, it can be further shown that the Hessian matrix of $\ln(\mu_v/\sigma_v)$ is negative definite for all values of $(\hat{x}_0, \hat{y_0})$ and $(x_0, y_0)$ inside $\mathcal{A}$ when $I_0, \rho$ and $\lambda_n$ are positive. Hence, $\ln(\mu_v / \sigma_v)$ is concave with respect to $(\hat{x}_0, \hat{y}_0)$, and $(\hat{x}_0, \hat{y}_0) = (x_0, y_0)$ is a global maximizer of $\ln(\mu_v/\sigma_v)$ in $\mathcal{A}$.

\begin{figure}
	\centering
	\begin{subfigure}[t]{0.45\textwidth}
		\includegraphics[width=\textwidth]{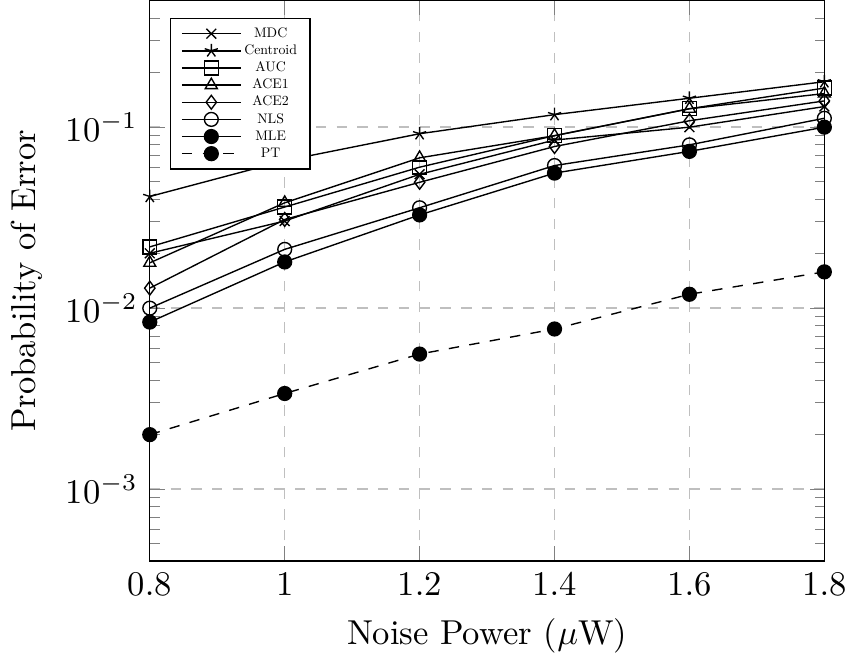} 
		\caption{ $M=16$} \label{fig6a}
	\end{subfigure}
	
	\hspace{0cm}
	\begin{subfigure}[t]{0.45\textwidth}
		\includegraphics[width=\textwidth]{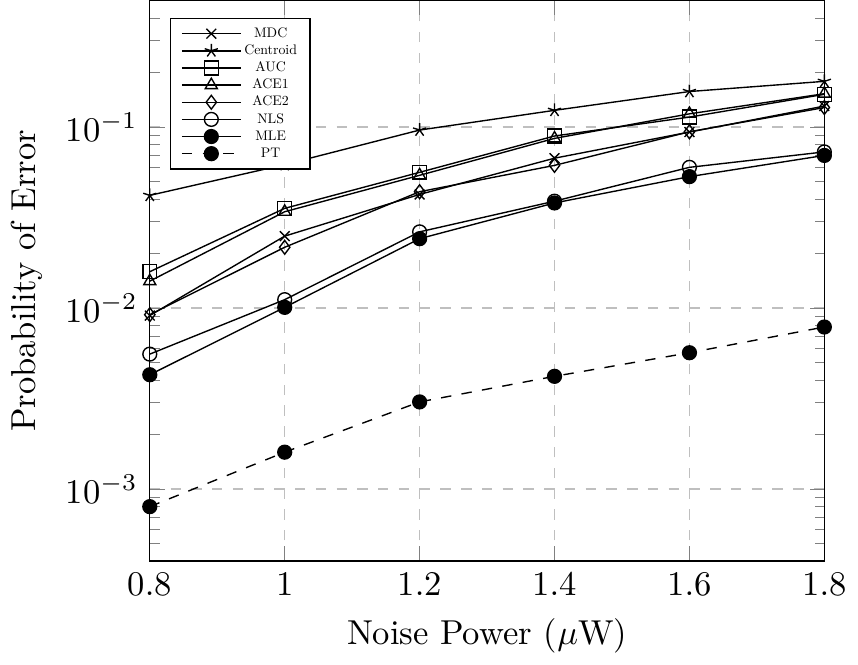} 
		\caption{ $M=36$} \label{fig6b}
	\end{subfigure}
	\vspace{0cm}
	\begin{subfigure}[t]{0.45\textwidth}
		\includegraphics[width=\textwidth]{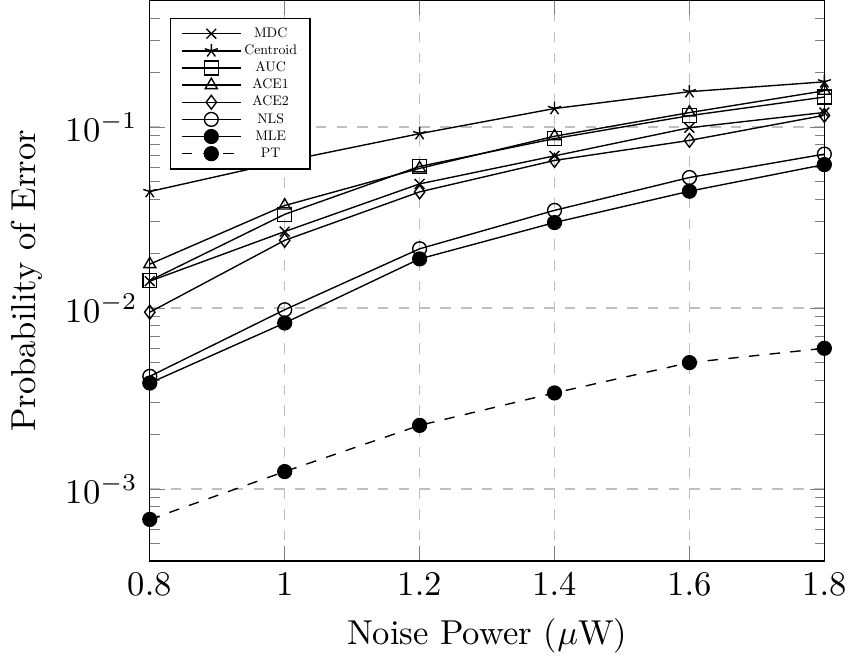} 
		\caption{$M=64$} \label{fig6c}
	\end{subfigure}
	\vspace{2.8cm}
	\caption{Fig.~\ref{fig6a} depicts the probability of error when $M=16$, Fig.~\ref{fig6b} for $M=36$, and Fig.~\ref{fig6c} presents the probability of error when $M=64$ for different estimators. For this simulation, the signal power was fixed at 0.5 $\mu$W, and the noise power was varied from 0.8 $\mu$W to 1.8 $\mu$W.   The beam radius $\rho$ was set at 0.2 meter.  For all figures, $(x_0, y_0)$ was sampled randomly on the detector array. The value of $|\mathcal{A}|$ is 4 square meters.} \label{fig6}
\end{figure}

\section{Simulation Preliminaries and Results} \label{sims}
For the mean-square error curves, we used \eqref{mse} and \eqref{mse_cent} to plot the mean-square error of maximum detector count estimator and centroid/AUC estimators\footnote{In case of AUC, we have to replace $Z_s$ with $Z_s'$ as mentioned in the section on AUC.}, respectively. Similarly, the equations \eqref{bias_mdc}, \eqref{mean1} and \eqref{mean2} are used to plot the bias functions of the MDC, centroid and AUC estimators, respectively. The quantity $P_{n,m}$ represents an infinite sum in \eqref{mse} and \eqref{bias_mdc}. Similarly, \eqref{mse_cent} also represents an infinite sum with respect to $z_s$. 

Let us consider the factor $P_{n,m}$ for $n=1, 2, \dotsc, M$ and $m$ fixed. Then, for the purpose of simulations, the infinite sum in the computation of $P_{n,m}$ is approximated by a finite sum. The upper limit in the sum is replaced by a large number $\eta_m$ such that the sum $S_m$ over the distribution
\begin{align}
S_m \triangleq \sum_{z_m=\eta_m}^\infty e^{-\Lambda_u} \frac{\Lambda_u^{z_m}}{z_m!} < \frac{\epsilon_0}{M!},
\end{align} 
where $\Lambda_u \triangleq \max(\Lambda_1, \Lambda_2, \dotsc, \Lambda_M)$, and $\eta_m >> \Lambda_u$. Therefore, we can replace the $\infty$ in the infinite sum of $P_{n,m}$ by $\eta_m$, and the error in approximation of $P_{n,m}$ is guaranteed to be less than $\epsilon_0$ for any $n$ and $m$. The $(M-1)!$ term appears because there are at least $(M-1)!$ terms over which the infinite sum is computed for any $P_{k,m}$, $k=1, \dotsc, M$ (please see \eqref{MDC}). The careful reader will discern that for a fixed $\epsilon$ and fixed $M$, both $\Lambda_u$ and $\eta_m$ are functions of $(x_0, y_0)$ if all other parameters of the beam are fixed. For the purpose of our simulations, we chose $\eta_m$ so that $\epsilon_0  \leq 10^{-5}$ for each $m$. 

By using the same arguments as before, the upper limit in the infinite sum is replaced by $\eta$. For the purpose of simulations, we chose $\eta$ so that the sum 
$S \triangleq \sum_{z_s=\eta}^\infty e^{-\Lambda_s} \frac{\Lambda_s^{z_s}}{z_s!} < \epsilon \leq 10^{-5}.$

For the rest of the estimators, it is not straightforward to come up with analytical expressions for the mean-square error and bias functions. Therefore, we resorted to Monte Carlo simulations in order to estimate the error and bias values.

First, we analyzed the mean-square error and the absolute bias values of different proposed estimators for a low signal-to-noise ratio regime, which is typically the scenario of interest. Fig.~\ref{fig1}, Fig.~\ref{fig2} and Fig.~\ref{fig3} indicate that the MLE and NLS estimators perform better than the low complexity estimators in terms of both the criteria.  We also note that the AUC estimator's bias diminishes with $M$.

Fig.~\ref{fig5} shows the effect of the number of cells $M$ and the radius $\rho$ on the mean-square error performance of different estimators. The two estimators, NLS estimator and MLE, were not included in this analysis because their computational complexity becomes prohibitively expensive for the purpose of simulations\footnote{A small $\rho$ incurs a large sharp peak for the likelihood function inside the cell where the true beam center is located, and the same fact holds for the objective function of the NLS estimator. This increases the time it takes for the evolutionary algorithm to converge to the maximizer. Additionally, a large $M$ increases the number of terms in the NLS objective function and the likelihood function, thereby causing the complexity to grow.}. It is interesting to note that the value of $M$ does not have a significant effect on  the performance of the centroid and AUC estimators. This is explained by the fact that for $x_0$, the centroid estimator is an averaging estimator: $\hat{x}_0 \triangleq \frac{1}{Z_s} \bZ^T \bx$, where $\bx \triangleq \begin{bmatrix}
x_1 & x_2 & \dots & x_M
\end{bmatrix}^T$, $x_m$ being the center of the $m$th cell. Since each of the coordinates $x_m$ are weighted linearly, then it can be seen, at least intuitively, that $\hat{x}_0$ will not change significantly if $M$ is increased for a fixed SNR. The same argument applies for $\hat{y}_0$, and the approximate constancy of AUC may be explained in a similar fashion. In the case of the rest of the estimators, the location coordinates are weighted nonlinearly, and hence, the mean-square error performance varies with $M$.

 Interestingly, as we had predicted in Paragraph~\ref{asymptotics}, the performance of ACE1 and ACE2 converges to the centroid estimator's performance as $M \to \infty$. Finally, we note that since ACE2 only utilizes the readings of top four photon counts in these simulations, the performance of ACE1 and ACE2 is exactly the same when $M=4$. 

The results obtained of Fig.~\ref{fig5b} can be explained in the light of arguments used to justify the behavior of the CRLB in Fig.~\ref{fig4b}. The interested reader is referred to Section~\ref{CRLB_analysis} in this regard. 

Fig.~\ref{fig6} presents the probability of error performance obtained with different beam position estimators/detector arrays. The curves are lower bounded by an ideal system's performance that essentially has perfect knowledge of the beam position. As can be observed, the lower bound becomes smaller with the number of detectors$M$ in the array. 

\section{A Brief Commentary on Computational Complexity of Tracking Algorithms}\label{Complexity}
Roughly, the complexity of low complexity estimators is on the order of $N^2$ real summations and $N^2$ real multiplies---barring the ACE2 and MDC---where $N = \sqrt{M}$ is the number of detectors along one side of the square shaped detector array. In case of MDC, we have to use a sorting technique in order to find the maximum out of an array of $N^2$ elements. For ACE2, we have approximately $L$ real sums and $L$ real multiplies in addition to the sorting algorithms complexity which sorts the highest $L$ numbers out of an array of $M$ numbers. We note again that $L=4$ for the purpose of simulations. Therefore, approximately, the complexity is a function of $N^2$ for low complexity trackers. 

The complexity of NLS and MLE is much higher: In addition to computing approximately $N^2$ real additions and $N^2$ real multiplies (see equations \eqref{estimate} and \eqref{mle}), the algorithms resort to a real number genetic algorithm in order to find the global maximum. The complexity of the real number genetic algorithm is discussed in \cite{Bashir1}. The complexity of the genetic algorithm is a function of number of chromosomes, $N_c$, and the number of generations\footnote{The number of generations can be regarded as the number of iterations required in order to converge to the true maximum/minimum of the objective function.}, $N_g$. The values of $N_c$ and $N_g$ should be chosen according to the nature of the objective function---a ``spikier'' function requires relatively large $N_g$ and $N_c$ for convergence to the true maximum. In our simulations, we set $N_c = 50$, and $N_g=400$. For each chromosome, the objective functions in given by \eqref{estimate} and \eqref{mle} are determined. Thus, the total complexity for the NLS or MLE tracking is approximately $N_c \times N_g \times N^2$ real multiplications and real additions \footnote{This does not include the complexity involved in comparing the fitness of the chromosomes during each iteration of the algorithm}.

The interested reader is referred to the excellent text \cite{Rao} for more details on genetic algorithms.

	\section{Conclusion}\label{Conc}
	
	In this paper, we have analyzed the problem of tracking with a photon-counting detector array receiver and Gaussian beams in a free-space optical communications system. From purely a communication theory point-of-view, an array of detectors is more useful from a single detector from two perspectives: i) The array of detectors minimizes the tracking error, and ii) and the detector arrays provide a better probability of error performance \cite{Bashir4}. However, through a study of the Cram\`er-Rao Lower Bound of the tracking error, we discovered that improvement in performance becomes smaller if we increase the number of cell from $M =N^2$ to $M = (N+1)^2$ when $N$ is large (law of diminishing returns). Moreover, the same law of diminishing returns applies to the probability of error performance as well. Additionally, the computational complexity increases linearly with the number of detectors $M$ in the array, and the circuit and storage complexities grow with $M$ as well. 
	
	Additionally, we also observed that the beam location on the array is not only required for tracking, but is also part of the channel state information required to decode pulse position modulation/on-off keying symbols. To this end, we proposed a number of non-Bayesian tracking algorithms and analyzed their mean-square error/probability of error performance. The two algorithms, namely the nonlinear least squares estimator and the maximum likelihood estimator, performed better than the different versions of the centroid algorithms. However, the two aforementioned algorithms incur a higher cost in terms of computational complexity. Additionally, since the computational overhead of maximum likelihood and nonlinear least squares estimators is comparable, the better performance of maximum likelihood estimator makes it a more viable tracking algorithm of the two. 
	
	 Therefore, depending on the trade-off between the performance and the price we are willing to pay for it, we can choose a certain number of detectors in our array, and a particular tracking algorithm, to track the beam position. Thus, if we are willing to invest in a more complex receiver in terms of an array of detectors and a high complexity algorithm like maximum likelihood, we can achieve better performance gains not only in terms of tracking, but also from the perspective of probability of error. 
	
	
	\bibliography{reference1.bib}

	\bibliographystyle{IEEEtran}
	
\end{document}